\documentclass[a4paper,12pt]{article}
\usepackage{geometry}

\usepackage[utf8]{inputenc}
\usepackage{amsmath}
\usepackage{amssymb}
\usepackage{bbm}
\usepackage{graphicx}
\usepackage{accents}
\usepackage{amsthm}
\usepackage{color}
\usepackage{graphics}
\usepackage{epsfig}
\usepackage{cancel}
\usepackage{epstopdf}
\usepackage{hyperref}
\usepackage{xcolor}
\usepackage{mathrsfs}
\usepackage{stmaryrd}

\usepackage{upgreek}
\usepackage{mathbbol}

\numberwithin{equation}{section}

\usepackage{xcolor}
\usepackage{framed}
\usepackage{shadethm}
\definecolor{shadecolor}{gray}{.9}
\definecolor{shaderulecolor}{gray}{.1}

\DeclareMathOperator{\tr}{tr}
\DeclareMathOperator{\dev}{dev}
\DeclareMathOperator{\pd}{\partial}
\newcommand{\mbf}[1]{\mathbf{#1}}
\newcommand{\mfk}[1]{\mathfrak{#1}}
\newcommand{\mcl}[1]{\mathcal{#1}}

\newtheorem{remark}{Remark}

\newtheorem{theorem}{Theorem}
\newtheorem{claim}{Claim}

\usepackage{hyperref}
\usepackage{xcolor}
\hypersetup{
colorlinks,
linkcolor={blue},
citecolor={blue},
urlcolor={blue!80!black}
}

\newcommand{\mcal}[1]{\mathcal{#1}}
\newcommand{\mbb}[1]{\mathbb{#1}}

\newcommand{\bsym}[1]{\boldsymbol{#1}}

\DeclareMathOperator{\up}{\bsym{u}}
\DeclareMathOperator{\us}{\bsym{v}}

\usepackage[left]{lineno}
\setpagewiselinenumbers

\usepackage[symbol]{footmisc}
\makeatletter
\def\blfootnote{\xdef\@thefnmark{}\@footnotetext}
\makeatother

\usepackage{array}
\newcolumntype{L}[1]{>{\raggedright\let\newline\\\arraybackslash\hspace{0pt}}m{#1}}
\newcolumntype{C}[1]{>{\centering\let\newline\\\arraybackslash\hspace{0pt}}m{#1}}
\newcolumntype{R}[1]{>{\raggedleft\let\newline\\\arraybackslash\hspace{0pt}}m{#1}}

\usepackage{tabu}

\title{Null Lagrangians in linear theories of micropolar type and few other generalizations of elasticity
}

\usepackage{fancyhdr}
\pagestyle{fancy}
\makeatletter
\let\runauthor\@author
\let\runtitle\@title
\makeatother
\lhead{\runauthor}
\rhead{Null Lagrangians in generalized elasticity}

\synctex=1

\author{Nirupam Basak$^1$, Basant Lal Sharma$^2$\thanks{Corresponding author email: bls@iitk.ac.in}\\
{\small $^1$Department of Mathematics, Indian Institute of Technology Kanpur}\\
{\small Kanpur, U. P. 208016, India; basakn@iitk.ac.in}\\
{\small $^2$Department of Mechanical Engineering, Indian Institute of Technology Kanpur}\\
{\small Kanpur, U. P. 208016, India; email{bls@iitk.ac.in}}}

\date{}

\allowdisplaybreaks 

\begin{document}

\maketitle



\begin{abstract}
{In the context of linear theories of generalized elasticity including those for homogeneous micropolar media, quasicrystals, piezoelectric and piezomagnetic media, we explore the concept of null Lagrangians. 
For obtaining the family of null Lagrangians we employ the sufficient conditions of H. Rund.
In some cases a non-zero null Lagrangian is found and
the stored energy admits a split into a null Lagrangian and a remainder. However, the null Lagrangian vanishes whenever the relevant elasticity tensor obeys certain symmetry conditions which can be construed as an analogue of the Cauchy relations. 
}
\end{abstract}

\section*{Introduction}
In the three-dimensional theory of classical linear elasticity, the function
\[
{\Psi}(\boldsymbol\epsilon)=\frac{1}{2}\boldsymbol\epsilon{\cdot} {{\mathbb{C}}}[\boldsymbol\epsilon]
\]
represents the stored energy, where ${{\mathbb{C}}}$ is the elasticity tensor \cite{Lo27} and the symmetric second order tensor $\boldsymbol\epsilon$ depends on displacement vector field. It is known \cite{Lancia95} that, for no choice of ${\mathbb{C}}$ other than $\mathbb{0}$, with the symmetries typical of elasticity tensors can it correspond to a null Lagrangian.
From a physical point of view, here, we recall that null Lagrangian refers to an energy density that does not contribute to the equilibrium equation for a given energy function \cite{Ericksen62} but rather accounts for environmental boundary effects. 
As an illustrative and significant example in the history of this subject, the topic of Frank constant $k_{24}$ in the classical theory of liquid crystals is well known one \cite{Virga94}; indeed, Frank constant (in the role of material modulus) has been shown to be associated with a term that is a null Lagrangian \cite{Ericksen62}.
As null Lagrangians cannot be interpreted as stored energy functionals in the linear theory of elasticity, i.e., borrowing a term from \cite{Ericksen62}, nilpotent energies are absent. 
However, as found out by \cite{cT94} (and \cite{Lancia95}) a given elasticity tensor ${\mathbb{C}}$ can be decomposed so that the stored energy functional admits a split into a null Lagrangian \cite{dE62,Ericksen62} and a remainder which further reveals interesting connections with the six Cauchy's relations \cite{cT94,Lo27,Lancia95,FI02,FI13}.
Exploring along these lines, we naturally attempt to widen the scope of these investigations to generalized theories of elasticity.
We recall that the classical theory of elasticity considers material particles as simple points having three translational degrees of freedom.
The constitutive relations of such materials are characterized by symmetric Cauchy stress as a consequence of the Cauchy hypothesis where surface loading is determined by force vector, while neglecting surface couples \cite{Lo27,Gurtin81}.
Occasionally, the classical model is found to be insufficient for some deformable solids such as cellular materials, foam-like structures (e.g., bones), periodic lattices, etc \cite{Lakes93b}. 
In this context, we invoke a three dimensional generalized elastic body which also has a rotational degree of freedom \cite{Cosserat09,ErT58,
Toupin64,Aero1,Aero2}.
In the case of linear micropolar material \cite{Eringen99}, the stored energy is represented by
\[
{{\Psi}}({{\upepsilon}},{{\upkappa}})=\frac{1}{2}{{\upepsilon}}{\cdot} {{\mathbb{A}}}[{{\upepsilon}}]+\frac{1}{2}{{\upkappa}}{\cdot} {{\mathbb{B}}}[{{\upkappa}}]
+{{\upepsilon}}{\cdot} {{\mathbb{D}}}[{{\upkappa}}],
\]
where ${{\mathbb{A}}},$ ${{\mathbb{B}}}$, ${{\mathbb{D}}}$ are the fourth order tensors of micropolar elasticity while ${{\upepsilon}}$ is the linearized strain tensor and ${{\upkappa}}$ is the linearized wryness tensor depending on the displacement vector field and the rotation vector field. 
In the first part of this document we discuss this energy functional from the point of view of null Lagrangians.
When ${{\mathbb{D}}}$ is non-zero it is commonly referred as a chiral (non-centrosymmetric) micropolar material
\cite{wN86}.
In addition to the linear micropolar elasticity, there have been also several developments towards generalization of elasticity for quasi-crystals \cite{BL75,ADL92}, piezoelectric and piezomagnetic media \cite{BCJ1964,ADL92}. We also discuss these theories from the viewpoint of nilpotent energies albeit briefly.
Such generalizations of the elasticity based models have important role in description of mechanical behaviour of the body in the presence of additional effects due to the presence of microstructure and other physical phenomenon. In fact, the micropolar kinematics itself is also deemed to be relevant for the piezoelectric or piezomagnetic materials since electromagnetic fields produce forces and moments \cite{EM90,Mg88}.
Based on the pioneering idea of \cite{Cosserat09}, several variants of the theory \cite{ErT58} of media with microstructure have been proposed \cite{Toupin64,Mindlin64,Eringen99}. Overall there have been a tremendous amount of researches within these classes of models, over the last century as well last couple of decades, which we do not intend to survey in this document.

We restrict our study to those Lagrangian functions ${\Psi}$ 
that belong to the null-space of the Euler--Lagrange operator; i.e., those functions for which the corresponding Euler--Lagrange equations are identically satisfied (see \cite{Rund66,Rund74,Olver88};
and also \cite{Ericksen62}, \cite{Picone62}, \cite{Picone64}, \cite{Franchis64}, \cite{Ball81}).
The characterization of the null Lagrangians goes back to the works of Caratheodory \cite{Caratheodory29} (so called `theory of equivalent integrals') and
given in \cite{Landers42,dE62}. The reader is also referred to the 
treatments given in \cite{Rund66} and \cite{dE80}.
It is also pertinent to mention the contribution of \cite{dE86},
who have given a representation for null Lagrangians
through the use of differential forms and suggested a way to map certain traction data to null Lagrangians.

\subsection*{Notation}
Let $\Omega$ denote an open, bounded region in three-dimensional space, $\bsym{n}$ be the outward unit normal to the boundary ${\pd}\Omega$. Also let ${{\bsym{u}}}$ denote the displacement vector field and ${\bsym{\phi}}$ be the rotation vector field over $\Omega$. 
Let $\{{\bsym{e}}_1, {\bsym{e}}_2, {\bsym{e}}_3\}$ be the standard basis for ${{\mathbb{R}}}^3$.
Let $\mathrm{SO(3)}$ denote the set of all rotation tensors in three dimensions, i.e., for $\mbf{Q}\in\mathrm{SO(3)}$ $\mbf{Q}^{\top}\mbf{Q}=\mbf{I}, \det \mbf{Q}=+1$. For a skew tensor $\mbf{W}$ (i.e., $\mbf{W}^{\top}=-\mbf{W}$) the axial vector
$\bsym{w}=\text{axl}\mbf{W},$
is defined by
$\mbf{W}\bsym{a}=\bsym{w}\wedge\bsym{a}, \forall \bsym{a}\in\mathbb{R}^3.$
The differentiation of a function $f$ with respect to the $x_j$ coordinate is written as $f_{,j}.$ Note that $\mbf{Q}^{\top}\mbf{Q}_{,j}$ is a skew tensor for a rotation tensor field $\mbf{Q}$ on $\Omega$.
A fourth order tensor, with components $A_{ijkl}$, is denoted by $\mathbb{A}$ and its action on the second order tensors is expressed as
$\mathbb{A}[\mbf{A}]$ with its $ij$ component given by $A_{ijkl}A_{kl}.$
The inner product between two second tensors $\mbf{A}$ and $\mbf{B}$ is denoted by $\mbf{A}\cdot\mbf{B}$ and equals $A_{ij}B_{ij}$.
The Einstein summation convention is used unless specified otherwise.
For convenience as well as additional clarity the notation for indices is changed at some places in the manuscript as specified later. Also, the notation and symbols for other relevant physical entities are defined as appropriate.

\section{Linearized micropolar model}
\label{micropolarelasticity}

In what follows we consider a {\em micropolar} elastic body where each point (particle) of the body is considered as {\em infinitesimal rigid body} with six degrees of freedom. The deformation of the body is described by a mapping from a fixed reference configuration into an current (deformed) configuration including rotations of the particles.
The reference configuration of such body is a bounded domain denoted by $\Omega\subset{\mathbb{R}}^3$.
We denote the {\em microdeformation} 
(describing the placement), 
and the {\em microrotation tensor} (describing the orientation of each particle) with
\begin{align}
\chi:\Omega\rightarrow\mathbb{R}^3, \quad
{\mbf{R}}:\Omega\rightarrow\mathrm{SO(3)},
\label{rotdef}
\end{align}
respectively.
It is assumed that $\Omega$ represents the reference configuration of the micropolar body. Consider any part $P$ with $\bsym{n}$ outward unit normal vector field on the surface $\pd P$ then
\begin{equation}
\bsym{t}={{\upsigma}}[\bsym{n}],\quad \bsym{m}={{\upmu}}[\bsym{n}],
\label{E:cauchy}
\end{equation}
where $\bsym{t}$ is the traction vector field, $\bsym{m}$ is the moment vector field, ${{\upsigma}}$ is the stress tensor field and ${{\upmu}}$ is the couple stress tensor field. In components,
$t_i={{\upsigma}}_{ij}n_j, m_i={{\upmu}}_{ij}n_j.$
For the micropolar body to be in equilibrium,
\begin{align}
\int_{P}\bsym{b}dc+\int_{{\pd} P}\bsym{t}ds=\bsym{0},
\text{ and }
\int_{P} ((\bsym{x}-\bsym{o})\wedge \bsym{b}+\bsym{c})dv+\int_{{\pd} P}((\bsym{x}-\bsym{o})\wedge\bsym{t}+\bsym{m})ds
=\bsym{0},
\end{align}
(with $\bsym{o}$ as the reference point) for every body part $P\subset\Omega$.
The equations of equilibrium in presence of an external body force (with components $b_i$) and an external body couple (with components $c_i$) are given by
\begin{align}
\label{microeqlb}
\frac{{\pd}{{\upsigma}}_{ij}}{{\pd} x_j} +b_i=0, \quad
\frac{{\pd} {{\upmu}}_{ij}}{{\pd}x_j}-{{\mathtt{E}}}_{ijk} {{\upsigma}}_{jk}+c_i=0.
\end{align}
Using \eqref{rotdef}, the deformation gradient and the nonsymmetric right stretch tensor are \cite{PE09} defined by
\begin{align}
\mbf{F}=\nabla\chi= \chi_{,i} \otimes {\bsym{e}}_i, \quad
\mbf{U}={\mbf{R}}^{\top}\mbf{F},
\end{align}
respectively.
In combination with the material frame indifference, this leads to the relative Lagrangian {\em stretch tensor}, as strain measure, 
and
the {\em wryness tensor}, as a Lagrangian measure for curvature,
respectively defined by
\begin{equation}
{\mbf{E}}={\mbf{U}}- {\mbf{I}},\quad
{\mbf{K}}= \text{axl}\big({\mbf{R}}^{\top} {\mbf{R}}_{,j} \big)\otimes {{\bsym{e}}}_j,
\end{equation}
In the nonlinear theory of micropolar elasticity, the strain energy density function \cite{PE09}
is 
${\mathcal{E}}({\mbf{E}},{\mbf{K}}).$
Let us consider the case when the translations ${{\bsym{u}}}$ and microrotations ${\bsym{\phi}}$ are very small, 
where 
\begin{align}
{{\bsym{u}}}:=\chi(\bsym{x})-\bsym{x},\quad
{\bsym{\phi}} := \varphi{\bsym{e}}
\end{align}
is the (small) translation vector and
the (small) rotation vector (i.e., $\mbf{R}\approx\mbf{I}-{{\mathtt{E}}}[{\bsym{\phi}}]$), respectively, so that
\begin{align}
{\mbf{E}}\approx\nabla{{\bsym{u}}}+{{\mathtt{E}}}[{\bsym{\phi}}], \quad{\mbf{K}}\approx\nabla{\bsym{\phi}},
\end{align}
where, ${\mathtt{E}}$ is the Levi-Civita (alternating) tensor in three dimensions (with components defined by ${{\mathtt{E}}}_{ijk}={\bsym{e}}_i\cdot{\bsym{e}}_j\wedge{\bsym{e}}_k$).
Above is essentially the definition of an asymmetric strain and an asymmetric torsion obtained by the linearization around the natural state \cite{Eringen99}. 
Indeed, the tensors ${{\upepsilon}}$, called the {\em linear stretch tensor} and ${{\upkappa}}$, called the {\em linear wryness tensor}, are defined by
\begin{equation}
{{\upepsilon}}=\nabla{{\bsym{u}}}+{{\mathtt{E}}}[{\bsym{\phi}}],\quad
\text{ and }
{{\upkappa}}=\nabla{\bsym{\phi}},
\label{E:akr}
\end{equation}
respectively. Thus, these are kinematic relations between the `small' displacement field ${{\bsym{u}}}$ and the `small' rotation vector ${\bsym{\phi}}$ vs the strain tensor (interpreted as linearized stretch tensor also) and torsion tensor (interpreted as linearized wryness tensor also);
in components ${{\upepsilon}}_{ij}={u}_{i,j}+{{\mathtt{E}}}_{kij}{\phi}_k$ and ${{\upkappa}}_{ij}={\phi}_{i,j}.$
We consider the case of 
homogeneous, linear, anisotropic, micropolar elasticity \cite{Eringen99}. The micropolar strain energy for a body of such a material is
\begin{align}
\label{en-grad1}
{{\mcl{V}}}({{{\bsym{u}}}},{{\bsym{\phi}}})=\int_\Omega {{\mathcal{E}}}({{\upepsilon}},{{\upkappa}}) dv,
\end{align}
with the energy density
\begin{align}
\label{en-grad2}
{{\mathcal{E}}}({{\upepsilon}},{{\upkappa}})=\frac{1}{2} {{A}}_{ijkl}{{\upepsilon}}_{ij} {{\upepsilon}}_{kl}
+\frac{1}{2} {{B}}_{ijkl} {{\upkappa}}_{ij}{{\upkappa}}_{kl}
+{{D}}_{ijkl} {{\upepsilon}}_{ij}{{\upkappa}}_{kl}.
\end{align}
The needed constitutive relations for the stress tensor ${{\upsigma}}_{ij}$ and the couple stress tensor ${{\upmu}}_{ij}$,
supplementing the equations of equilibrium \eqref{microeqlb}, are
\begin{align}
{{\upsigma}}_{ij}=\frac{{\pd} {{\mathcal{E}}}}{{\pd} {{\upepsilon}}_{ij}}={{A}}_{ijkl} {{\upepsilon}}_{kl}+{{D}}_{ijkl}{{\upkappa}}_{kl},\quad
{{\upmu}}_{ij}=\frac{{\pd} {{\mathcal{E}}}}{{\pd} {{\upkappa}}_{ij}}={{B}}_{ijkl}{{\upkappa}}_{kl}+{{D}}_{klij}{{\upepsilon}}_{kl},
\label{sigmmunc}
\end{align}
where ${{A}}_{ijkl}$, ${{D}}_{ijkl}$ and ${{B}}_{ijkl}$ are the (constant) elastic tensors of micropolar elasticity with the symmetries of ${{\mathbb{A}}}$ and ${{\mathbb{B}}}$ given by
\begin{align}
\label{symm}
{{A}}_{ijkl}={{A}}_{klij},\quad
{{B}}_{ijkl}={{B}}_{klij}.
\end{align}

It is easy to see that the Euler--Lagrange equations for the functional \eqref{en-grad1} are
\begin{align}
{{A}}_{ijkl}\frac{{\pd}} {{\pd}x_j}({u}_{k,l}+{{\mathtt{E}}}_{mkl}{\phi}_m)+{{D}}_{ijkl}\frac{{\pd}} {{\pd}x_j}{\phi}_{k,l}=0,
\end{align}
and
\begin{align}
{{B}}_{ijkl}\frac{{\pd}}{{\pd}x_j}{\phi}_{k,l}+{{D}}_{klij}\frac{{\pd}}{{\pd}x_j}({u}_{k,l}+{{\mathtt{E}}}_{mkl}{\phi}_m)-{{\mathtt{E}}}_{ijk} ({{A}}_{jkmn} ({u}_{m,n}+{{\mathtt{E}}}_{pmn}{\phi}_p)+{{D}}_{jkmn}{\phi}_{m,n})=0,
\end{align}
which naturally coincide with the equations of equilibrium \eqref{microeqlb} in the absence of external forces and couples.
Expanding further,
\begin{align}
{{A}}_{ijkl}{u}_{k,lj}+{{D}}_{ijkl}{\phi}_{k,lj}+{{A}}_{ijkl}{{\mathtt{E}}}_{mkl}{\phi}_{m,j}=0,
\end{align}
and
\begin{align}
{{B}}_{ijkl}{\phi}_{k,lj}+{{D}}_{klij}{u}_{k,lj}+{{D}}_{klin}{{\mathtt{E}}}_{mkl}{\phi}_{m,n}-{{\mathtt{E}}}_{ijk} ({{A}}_{jkmn} ({u}_{m,n}+{{\mathtt{E}}}_{pmn}{\phi}_p)+{{D}}_{jkmn}{\phi}_{m,n})=0.
\end{align}
We intend to look for null Lagrangian of the form of \eqref{en-grad2}.
For the conditions on such a null Lagrangian, 
we insist on the vanishing of the highest order derivative term (second order) in above equation. In other words, a set of sufficient conditions for the null Lagrangian are
\begin{align}
\label{symm1}
{{A}}_{ijkl}+{{A}}_{ilkj}=0, \quad {{B}}_{ijkl}+{{B}}_{ilkj}=0, \quad {{D}}_{ijkl}+{{D}}_{ilkj}=0,
\end{align}
and
\begin{align}
\label{symm2}
{{\mathtt{E}}}_{mkl}{{D}}_{klin}-{{\mathtt{E}}}_{ijk} {{D}}_{jkmn}=0,
\end{align}
as well as,
\begin{align}
\label{symm3}
{{\mathtt{E}}}_{mkl}{{A}}_{ijkl}=0, \quad
{{\mathtt{E}}}_{ijk}{{A}}_{jkmn}=0, \quad
{{\mathtt{E}}}_{ijk}{{A}}_{jkmn} {{\mathtt{E}}}_{pmn}=0.
\end{align}
In view of \eqref{symm}${}_1$, it is clear that the last three conditions are implied by ${{\mathtt{E}}}_{mkl}{{A}}_{ijkl}=0$. This implies, for $k\ne l$,
\begin{align}
\label{symm4}
{{A}}_{ijkl}-{{A}}_{ijlk}=0,
\end{align}
however, due to \eqref{symm}${}_1$, for $i\ne j$: ${{A}}_{ijkl}-{{A}}_{jikl}=0$ and ${{A}}_{ijkl}+{{A}}_{kjil}=0$.
This leads to
\begin{align}
\label{symm5}
{{A}}_{ijkl}=-{{A}}_{kjil}=-{{A}}_{jkil}={{A}}_{ikjl}.\\
\therefore\quad {{A}}_{ijkl}=-{{A}}_{kjil}=-{{A}}_{jkil}=-{{A}}_{jikl}=-{{A}}_{ijkl}\\
\implies {{A}}_{ijkl}=0\implies{\mathbb{A}}=0.
\end{align}

Concerning the conditions for $\mathbb{D},$ we make the following claim:
\begin{claim}
The conditions \eqref{symm1}${}_3$ and \eqref{symm2} are equivalent to the following
\begin{subequations}
\begin{align}
{{D}}_{ijkl}&=-{{D}}_{ilkj}\\
{{D}}_{ijji}&=-{{D}}_{ikki}\text{ for }i\ne j\ne k\ne i\\
{{D}}_{ijjk}&={{D}}_{kikk}+{{D}}_{jijk}\text{ for }i\ne j\ne k\ne i.
\end{align}
\label{symmset1}
\end{subequations}
\label{claim1}
\end{claim}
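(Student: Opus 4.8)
The plan is to handle the three asserted identities of \eqref{symmset1} by first isolating where the real content lies. The first identity, $D_{ijkl}=-D_{ilkj}$, is nothing but \eqref{symm1}${}_3$ rewritten, so one half of the equivalence is immediate and the entire task reduces to showing that, in the presence of this antisymmetry, the condition \eqref{symm2} is equivalent to the remaining two relations of \eqref{symmset1}.

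The key observation I would exploit is that both terms of \eqref{symm2} are ``axial parts in the first index pair'' of $D$. Introducing $G_{pqr}:=\mathtt{E}_{pab}D_{abqr}$, the first term $\mathtt{E}_{mkl}D_{klin}$ is precisely $G_{min}$ (with $p=m$, $q=i$, $r=n$) and the second term $\mathtt{E}_{ijk}D_{jkmn}$ is $G_{imn}$ (with $p=i$, $q=m$, $r=n$); hence \eqref{symm2} says exactly that $G$ is symmetric in its first two indices, $G_{pqr}=G_{qpr}$. This reformulation is the conceptual heart of the argument, since it collapses \eqref{symm2} into a single transparent symmetry condition.

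Next I would expand $G_{pqr}=G_{qpr}$ into scalar equations. The case $p=q$ is vacuous, so only the three unordered pairs $\{p,q\}\subset\{1,2,3\}$ with $p\ne q$ contribute, each for the three possible values of $r$, giving nine equations in all. For fixed $p$ the factor $\mathtt{E}_{pab}$ is nonzero only for the two orderings of the complementary pair, so each axial sum collapses to a signed difference of two components of $D$, and each of the nine equations becomes an elementary linear relation among (at most) four entries of $D$. At this point I would invoke the antisymmetry $D_{ijkl}=-D_{ilkj}$ in the form that any component of $D$ whose second and fourth indices coincide must vanish, which annihilates some of the terms in each equation.

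Carrying this out, the surviving relations organise themselves according to the position of $r$: when $r$ equals the index complementary to $\{p,q\}$, two of the four terms drop out and the remaining two-term relation is $D_{ijji}=-D_{ikki}$ (three such cases), while when $r\in\{p,q\}$ exactly one term drops out and the remaining three-term relation is $D_{ijjk}=D_{kikk}+D_{jijk}$ (six such cases). Thus the nine symmetry equations match the three relations of the second line and the six permutations of the third line of \eqref{symmset1} exactly, with no redundancy. Since each reduction step, namely the reformulation via $G$, the Levi-Civita collapse, and the elimination of the vanishing components via the antisymmetry, is reversible once $D_{ijkl}=-D_{ilkj}$ is assumed, the converse implication follows at once and the equivalence is established. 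I expect the only genuine obstacle to be the bookkeeping of the Levi-Civita signs and the careful verification that the nine symmetry equations map bijectively onto the listed relations; the rest of the argument is structural.
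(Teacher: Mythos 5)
Your proof is correct, and while it rests on the same elementary computation as the paper's --- expanding the Levi--Civita contractions into signed differences of components of $\mathbb{D}$ and using the antisymmetry $D_{ijkl}=-D_{ilkj}$ (equivalently, the vanishing $D_{ijkj}=0$ of every component whose second and fourth indices coincide) to kill terms --- it is organized quite differently. The paper proves the two implications separately: the forward direction by substituting specific permutations $(h,s,t)=\pm\pi(1,2,3)$ into \eqref{symm2} to extract the second and third relations of \eqref{symmset1}, and the converse by a three-case analysis on the relative positions of the indices $h,k,l$. You instead introduce $G_{pqr}=\mathtt{E}_{pab}D_{abqr}$ and observe that \eqref{symm2} is exactly the symmetry $G_{pqr}=G_{qpr}$ in the first two slots; since the diagonal $p=q$ is vacuous, this amounts to nine scalar equations, and after discarding the components that vanish by the antisymmetry, these nine equations are term-for-term the three two-index relations of \eqref{symmset1} (arising when $r$ is the index complementary to $\{p,q\}$, where two of the four terms drop) and the six three-index relations (arising when $r\in\{p,q\}$, where exactly one term drops) --- a classification I have checked and which comes out with the signs as you claim. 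Because every step in this chain is an equivalence once the antisymmetry is assumed, both directions follow in a single pass, and the exact count $9=3+6$ shows the list in \eqref{symmset1} is neither redundant nor incomplete, a point the paper's two-directional case analysis does not make explicit. What the paper's version buys is that the explicit permutation substitutions can be verified mechanically with no auxiliary notation; what yours buys is a structural explanation of why precisely these relations appear and a converse that comes for free.
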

\begin{proof}
First assume that \eqref{symm1}${}_3$ and \eqref{symm2} hold.
By \eqref{symm1}${}_3$,
\begin{align}
{{D}}_{riki}=0\label{E:pex}
\end{align}
Put $(h,s,t)=\pi(1,2,3)$. Then \eqref{symm2} gives (no sum)
\begin{gather}
{{D}}_{thht}-{{D}}_{htht}={{D}}_{stst}-{{D}}_{tsst}\notag\\
\implies\;{{D}}_{thht}=-{{D}}_{tsst}\quad\text{by \eqref{E:pex}}\label{E:pff}
\end{gather}
Also if $(h,s,t)=-\pi(1,2,3)$ then by same calculation \eqref{E:pff} holds.\\
We now put $s=t\ne h$, and assume that $(s,h,k)=\pi(1,2,3)$. Then \eqref{symm2} gives (no sum)
\begin{gather}
{{D}}_{hkhs}-{{D}}_{khhs}={{D}}_{ksss}-{{D}}_{skss}\notag\\
\implies\;{{D}}_{khhs}={{D}}_{hkhs}+{{D}}_{skss}\quad\text{by \eqref{E:pex}}\label{E:ps}
\end{gather}
Also if $(s,h,k)=-\pi(1,2,3)$ then by same calculation \eqref{E:ps} holds. Also for $h=t\ne s$, by similar argument \eqref{E:ps} holds.

For the proof in the other direction, we assume that \eqref{symmset1} holds.
To obtain \eqref{symm2}, we fix $h\in\{1,2,3\}$. If $k=h$ then \eqref{symm2} trivially holds, so that we consider $k\ne h$. Then
\[
\sum_{i,j=1}^3{{D}}_{ijkl}e_{ijh}=\left({{D}}_{ijkl}-{{D}}_{jikl}\right)e_{ijh}\;\text{(no\;sum)}
\]
where $h$ is fixed, $i$ and $j$ are also fixed considering non-zero contributions. Then six cases arises for $k$ and $l$ ($k$ has two values and $l$ has three). Let us divide these six possibilities in the following three cases:\\
\textbf{Case 1.} [$h\ne l\ne k$]\\
This case consists of two sub-cases $(h,k,l)=\pi(1,2,3)$ and $(h,k,l)=-\pi(1,2,3)$. Then for nonzero term, $\{i,j\}=\{k,l\}$. Without any loss of generality, we assume that $i=l$. Therefore $\sum_{i,j=1}^3{{D}}_{ijkl}e_{ijh}={{D}}_{ijkl}e_{ijh}={{D}}_{lkkl}e_{lkh}$. Similarly, $\sum_{i,j=1}^3{{D}}_{ijhl}e_{ijk}={{D}}_{lhhl}e_{lhk}.$ But ${{D}}_{lkkl}e_{lkh}=-{{D}}_{lhhl}e_{lkh}={{D}}_{lhhl}e_{lhk},$ so that \eqref{symm2} holds.\\
\textbf{Case 2.} [$k=l$]\\
This case consists of two sub-cases $(p,l,h)=\pi(1,2,3)$ and $(p,l,h)=-\pi(1,2,3)$, where $p$ is the only element of the set $\{1,2,3\}\backslash\{h,k\}$. Then for nonzero contributions of the relevant term, either $i=k=l$ or $j=k=l$. Without any loss of generality, we assume that $i=k=l$. Therefore $\sum_{i,j=1}^3{{D}}_{ijkl}e_{ijh}={{D}}_{ijkl}e_{ijh}={{D}}_{ljll}e_{ljh}=\left({{D}}_{jhhl}-{{D}}_{hjhl}\right)e_{ljh}=\left({{D}}_{hjhk}-{{D}}_{jhhk}\right)e_{hjk}$. Thus, for $k=l$, 
\[
\sum_{i,j=1}^3{{D}}_{ijhl}e_{ijk}=\left({{D}}_{hjhl}-{{D}}_{jhhl}\right)e_{hjk}(WLOG)=\left({{D}}_{hjhk}-{{D}}_{jhhk}\right)e_{hjk}=\sum_{i,j=1}^3{{D}}_{ijkl}e_{ijh}.
\]
Therefore, \eqref{symm2} holds.\\
\textbf{Case 3.} [$h=l]$\\
This case consists of two sub-cases $(p,k,l)=\pi(1,2,3)$ and $(p,k,l)=-\pi(1,2,3)$, where $p$ is the only element of the set $\{1,2,3\}\backslash\{h,k\}$. Similar calculation as case 2 shows that \eqref{symm2} holds.\\
\end{proof}

\begin{remark}
According to above, it is noteworthy that \eqref{symm1}${}_1$, \eqref{symm1}${}_2$, and \eqref{symmset1} are sufficient conditions on the fourth order tensors ${{\mathbb{A}}}$, ${{\mathbb{B}}}$, and ${{\mathbb{D}}}$ satisfying the symmetries \eqref{symm} for an energy function to be a null Lagrangian in the linear micropolar elastostatics.
\end{remark}

\subsection{Centrosymmetric model}

In the absence of chiral term involving ${\mathbb{D}}$, for the general anisotropic case of the so called centrosymmetric model, the constitutive relations \eqref{sigmmunc} become
${{\upsigma}}_{ij}={{A}}_{ijkl}{{\upepsilon}}_{kl},
{{\upmu}}_{ij}={{B}}_{ijkl}{{\upkappa}}_{kl}.$
The suitable expression of the stored energy function for centrosymmetric micropolar material is
\begin{equation}
\label{E:ef}
{{\mathcal{E}}}({{\upepsilon}},{{\upkappa}})=\frac{1}{2}{{\upepsilon}}{\cdot} {{\mathbb{A}}}[{{\upepsilon}}]+\frac{1}{2}{{\upkappa}}{\cdot} {{\mathbb{B}}}[{{\upkappa}}].
\end{equation}
Thus the total stored energy functional is
\begin{equation}
\label{E:teC}
{{\mcl{V}}}({{{\bsym{u}}}},{{\bsym{\phi}}})=\frac{1}{2}\int_\Omega({{\upepsilon}}{\cdot} {{\mathbb{A}}}[{{\upepsilon}}]+{{\upkappa}}{\cdot} {{\mathbb{B}}}[{{\upkappa}}])dv.
\end{equation}
By 
\eqref{E:akr},
the Lagrangian 
\begin{align}
\label{E:lg}
{\Psi}
(\nabla{{{\bsym{u}}}},\nabla{{\bsym{\phi}}})=\frac{1}{2}({u}_{i,j}+{{\mathtt{E}}}_{ijs}{\phi}_s){{A}}_{ijkl}({u}_{k,l}+{{\mathtt{E}}}_{klr}{\phi}_r)
+\frac{1}{2}{\phi}_{i,j}{{B}}_{ijkl}{\phi}_{k,l}
\end{align}
is a polynomial of degree $2$.
It is of interest to characterize the possible null Lagrangians within the family stated by H. Rund \cite{Rund66}.

As a departure from the indicial notation used so far, we employ a different notation in the following. This closely follows the covariant/contravariant notation, but it should be remembered that we are not dealing with a curvilinear basis here; in fact we stick to the standard basis $\{{\bsym{e}}_1, {\bsym{e}}_2, {\bsym{e}}_3\}$ of ${{\mathbb{R}}}^3$ (and $\{{\bsym{e}}_1, {\bsym{e}}_2, \dotsc, {\bsym{e}}_N\}$ of ${{\mathbb{R}}}^N$) and employ $\{{\bsym{e}}^1, {\bsym{e}}^2, {\bsym{e}}^3\}$ as its own reciprocal copy (resp. $\{{\bsym{e}}^1, {\bsym{e}}^2, \dotsc, {\bsym{e}}^N\}$ of ${{\mathbb{R}}}^N$).
In general, the Greek indices range over $1, 2, 3,$ while the Latin indices range over $1, \dotsc, N$ unless specified otherwise.
Accordingly, we denote the components of $\bsym{x}\in {{\mathbb{R}}}^3$ as $x^\alpha$.
In view of the analysis of other models in the sequel, it is useful to consider $\bsym{y}\in {{\mathbb{R}}}^N$; for instance, in the present section where we discuss linear micropolar media, we have $N=6$
with the identification
\begin{align}
\bsym{y}=({{\bsym{u}}},{\bsym{\phi}}), \quad D{\bsym{y}}=(\nabla{{{\bsym{u}}}},\nabla{{\bsym{\phi}}}).
\end{align}
The components of $\bsym{y}$ are denoted by $y^i$, and the components of its derivative $D{\bsym{y}}\in {{\mathbb{R}}}^{N\times3}$ as $y^i_\alpha$.
Using this notation, the Euler operator for an arbitary Lagrangian ${\Psi}$ is found to be \cite{Rund66}
\begin{equation}
\label{E:ele}
{\mathscr{E}}_k({\Psi})=\frac{{\pd}^2{\Psi}}{{\pd} x^{{{\gamma}}}{\pd} y^k_{{{\gamma}}}}+\frac{{\pd}^2{\Psi}}{{\pd} y^j{\pd} y^k_{{{\gamma}}}}y^j_{{{\gamma}}}+\frac{{\pd}^2{\Psi}}{{\pd} y^j_{{{\beta}}}{\pd} y^k_{{{\gamma}}}}\frac{{\pd}^2y^j}{{\pd} x^{{{\beta}}}{\pd} x^{{{\gamma}}}}-\frac{{\pd}{\Psi}}{{\pd} y^k},
\end{equation}
where $k=1,\dotsc,N$. 
Here we recall that a Lagrangian 
is 
called null Lagrangian if and only if
\begin{equation}
\label{E:eqv}
{\mathscr{E}}_k({\Psi})\equiv0.
\end{equation}

The characterization theorem for null Lagrangians of the form of ${\Psi}$ ($:={\Psi}(D{\bsym{y}})$) 
is stated as Theorem \ref{mainthm} in Appendix \ref{S:cnl}; In fact,
${\Psi}$ is a null Lagrangian of polynomial degree $2$ in $D{\bsym{y}}$ if 
\begin{equation}
\label{E:flg}
{\Psi}
=\frac{1}{2}\mfk{D}(\alpha_1,\alpha_2;i_1,i_2)y^{i_1}_{\alpha_1}y^{i_2}_{\alpha_2}
+\mfk{D}(\alpha_1;i_1)y^{i_1}_{\alpha_1}
+\frac{1}{2}\mfk{D}(0;0),
\end{equation}
where $S^\alpha(\bsym{x},\bsym{y})$ 
are $\mathcal{C}^2$ functions,
while
\begin{align}
\mfk{D}(\alpha_1,\alpha_2;i_1i_2)&:=
\begin{vmatrix}
S^{\alpha_1}_{i_1}&S^{\alpha_2}_{i_1}\\
S^{\alpha_1}_{i_2}&S^{\alpha_2}_{i_2}
\end{vmatrix}, \quad
\mfk{D}(\alpha_1;i_1)&:=
\begin{vmatrix}
S^{\alpha_1}_{i_1}&S^{\alpha_2}_{i_1}\\
S^{\alpha_1}_{|\alpha_2}&S^{\alpha_2}_{|\alpha_2}
\end{vmatrix}, \quad
\mfk{D}(0;0)&:=
\begin{vmatrix}
S^{\alpha_1}_{|\alpha_1}&S^{\alpha_2}_{|\alpha_1}\\
S^{\alpha_1}_{|\alpha_2}&S^{\alpha_2}_{|\alpha_2}
\end{vmatrix}.
\label{Daaii}
\end{align}
For the purpose of comparison with \eqref{E:lg}, it is useful to consider above in terms of splitting of $\bsym{y}$ in ${{\bsym{u}}}$ and ${\bsym{\phi}}$. Thus, \eqref{E:flg} can be expressed as
\begin{align}
{\Psi}(\nabla{{{\bsym{u}}}},\nabla{{\bsym{\phi}}})
&=\frac{1}{2}\mfk{D}_1(\alpha_1,\alpha_2;i_1,i_2){u}_{i_1,\alpha_1}{u}_{i_2,\alpha_2}+\frac{1}{2}\mfk{D}_2(\alpha_1,\alpha_2;i_1,i_2){\phi}_{i_1',\alpha_1}{\phi}_{i_2',\alpha_2}+\notag\\
&\frac{1}{2}\mfk{D}_3(\alpha_1,\alpha_2;i_1,i_2){u}_{i_1,\alpha_1}{\phi}_{i_2',\alpha_2}+\mfk{D}_1(\alpha_1;i_1){u}_{i_1,\alpha_1}+\mfk{D}_2(\alpha_1;i_1){\phi}_{i_1',\alpha_1}\notag\\
&+\frac{1}{2}\mfk{D}(0;0)\label{E:flgs}
\end{align}
where 
$\mfk{D}_1$'s are restriction of corresponding $\mfk{D}$'s \eqref{Daaii} for $1\le i_1,i_2\le3$,
$\mfk{D}_2$'s for $4\le i_1,i_2\le6$,
$\mfk{D}_3$ for $1\le i_1\le3,\,4\le i_2\le6$, while the primed indices are given by
\[
i_1'=i_1-3,\,i_2'=i_2-3.
\]
The representation for $\mfk{D}_3$ in \eqref{E:flgs} is valid since $\mfk{D}_3(\alpha_1,\alpha_2;i_1,i_2)=\mfk{D}_3(\alpha_2,\alpha_1;i_2,i_1).$

With the expanded form \eqref{E:flgs} of the expression of stored energy \eqref{E:flg}, it is easier to compare it with the form prescribed by \eqref{E:lg}.
Thus, \eqref{E:lg} is a null Lagrangian provided there exist functions $S^\alpha$ such that
\begin{alignat}{8}
\begin{vmatrix}
S^{\alpha_1}_{i_1}&S^{\alpha_2}_{i_1}\\
S^{\alpha_1}_{i_2}&S^{\alpha_2}_{i_2}
\end{vmatrix}
&=\mfk{D}_1(\alpha_1,\alpha_2;i_1,i_2)={{A}}_{i_1\alpha_1i_2\alpha_2},\quad&
\begin{vmatrix}
S^{\alpha_1}_{i_1}&S^{\alpha_2}_{i_1}\\
S^{\alpha_1}_{i_2}&S^{\alpha_2}_{i_2}
\end{vmatrix}
&=\mfk{D}_2(\alpha_1,\alpha_2;i_1,i_2)={{B}}_{i_1'\alpha_1i_2'\alpha_2},\label{E:ics}\\
\begin{vmatrix}
S^{\alpha_1}_{i_1}&S^{\alpha_2}_{i_1}\\
S^{\alpha_1}_{i_2}&S^{\alpha_2}_{i_2}
\end{vmatrix}
&=\mfk{D}_3(\alpha_1,\alpha_2;i_1,i_2)=0,\quad&
\begin{vmatrix}
S^{\alpha_1}_{i_1}&S^{\alpha_2}_{i_1}\\
S^{\alpha_1}_{|\alpha_2}&S^{\alpha_2}_{|\alpha_2}
\end{vmatrix}
&=\mfk{D}_1(\alpha_1;i_1)=e_{ijs}{\phi}_s{{A}}_{iji_1\alpha_1},\label{E:icfr}\\
\begin{vmatrix}
S^{\alpha_1}_{i_1}&S^{\alpha_2}_{i_1}\\
S^{\alpha_1}_{|\alpha_2}&S^{\alpha_2}_{|\alpha_2}
\end{vmatrix}
&=\mfk{D}_2(\alpha_1;i_1)=0,\quad&
\begin{vmatrix}
S^{\alpha_1}_{|\alpha_1}&S^{\alpha_2}_{|\alpha_1}\\
S^{\alpha_1}_{|\alpha_2}&S^{\alpha_2}_{|\alpha_2}
\end{vmatrix}
&=\mfk{D}(0;0)=e_{ijs}e_{klr}{\phi}_s{\phi}_r{{A}}_{ijkl}.
\label{E:icsx}
\end{alignat}

We obtain further explicit form of necessary conditions on ${{\mathbb{A}}}$ and ${\mathbb{B}}$ from these conditions \eqref{E:ics}--\eqref{E:icsx}. From 
\eqref{E:ics}, using the properties of the determinant, we get the following restrictions for ${\mathbb{A}}$ and ${\mathbb{B}}$:
\begin{align}
{{A}}_{ijkl}={{A}}_{klij},\quad&{{B}}_{ijkl}={{B}}_{klij},\label{E:cndf}\\
{{A}}_{ilkj}=-{{A}}_{ijkl},\quad&{{B}}_{ilkj}=-{{B}}_{ijkl},\label{E:cnds}\\
{{A}}_{kjil}=-{{A}}_{ijkl},\quad&{{B}}_{kjil}=-{{B}}_{ijkl},\label{E:cndt}\\
{{A}}_{ijkl}=0,\quad&{{B}}_{ijkl}=0,&\text{if $i=k$ or $j=l$}.\label{E:cndl}
\end{align}
In order to proceed towards more simplification, let us introduce an alternate working notation for the functions of the form $S^{\alpha_1}_{i_1}$ and bring in more explicit form of the partial derivative involved. In view of the partition in the range of index $i_1$, etc, for clearer exposition, it is convenient
to denote 
\[
\frac{\partial S^\alpha}{\partial {u}_i}\text{ as }S^\alpha_{{u}_i}\text{ and }\frac{\partial S^\alpha}{\partial {\phi}_i}\text{ as }S^\alpha_{{\phi}_i},
\]
where $\alpha$ and $i$ both range over $1,2,3.$ Let us consider the following cases:\\
\textbf{Case 1.} Let 
$S^\alpha_{{u}_i}\ne0\ne S^\alpha_{{\phi}_i}\;\forall\alpha,i\in\{1,2,3\}.$
Then as 
\eqref{E:icfr}${}_1$ holds, i.e., 
$\begin{vmatrix}S^{\alpha_1}_{i_1}&S^{\alpha_2}_{i_1}\\
S^{\alpha_1}_{i_2}&S^{\alpha_2}_{i_2}\end{vmatrix}=0\Leftrightarrow
\begin{vmatrix}
S^{\alpha_1}_{{u}_i}&S^{\alpha_2}_{{u}_i}\\
S^{\alpha_1}_{{\phi}_j}&S^{\alpha_2}_{{\phi}_j}
\end{vmatrix}=
0$. Hence,
$\exists_{c_{ij}}$ such that
$0\ne|c_{ij}|<\infty$
and
\[
S^\alpha_{{u}_i}=c_{ij}S^\alpha_{{\phi}_j}\quad\text{(no sum over $j$).}
\]
This implies
\begin{align*}
{{A}}_{i\alpha j\beta}&=
\begin{vmatrix}
S^\alpha_{{u}_i}&S^\beta_{{u}_i}\\
S^\alpha_{{u}_j}&S^\beta_{{u}_j}
\end{vmatrix}
=c_{jj}
\begin{vmatrix}
S^\alpha_{{u}_i}&S^\beta_{{u}_i}\\
S^\alpha_{{\phi}_j}&S^\beta_{{\phi}_j}
\end{vmatrix}
=c_{jj}\mfk{D}_3(\alpha,\beta;i,j)=0\\
\text{and }{{B}}_{i\alpha j\beta}&=
\begin{vmatrix}
S^\alpha_{{\phi}_i}&S^\beta_{{\phi}_i}\\
S^\alpha_{{\phi}_j}&S^\beta_{{\phi}_j}
\end{vmatrix}
=\frac{1}{c_{ii}}
\begin{vmatrix}
S^\alpha_{{u}_i}&S^\beta_{{u}_i}\\
S^\alpha_{{\phi}_j}&S^\beta_{{\phi}_j}
\end{vmatrix}
=\frac{1}{c_{ii}}\mfk{D}_3(\alpha,\beta;i,j)=0
\end{align*}
Therefore in this case no non-trivial null Lagrangian exists.\\
\textbf{Case 2.} Let for some $\tilde{\alpha},\tilde{i}\in\{1,2,3\},\;S^{\tilde{\alpha}}_{{\phi}_{\tilde{i}}}=0$.
Then as 
\eqref{E:icfr}${}_1$
holds, it implies $\exists_{a_{\tilde{i}},a_j}$ such that
not both are zero, $|a_{\tilde{i}}|,|a_j|<\infty$
and
\[
\begin{cases}
a_{\tilde{i}}S^{\tilde{\alpha}}_{{\phi}_{\tilde{i}}}=a_jS^{\tilde{\alpha}}_{{u}_j}\\
a_{\tilde{i}}S^\beta_{{\phi}_{\tilde{i}}}=a_jS^\beta_{{u}_j}
\end{cases}
\quad\text{(no sum over ${\tilde{i}},j$).}
\]
If $a_{\tilde{i}}=0$, then $a_j\ne0$ which implies $S^{\tilde{\alpha}}_{{u}_j}=0=S^\beta_{{u}_j}$. 
If $a_{\tilde{i}}\ne0$ then $a_j=0$ implies $S^\beta_{{\phi}_{\tilde{i}}}=0$ and $S^{\tilde{\alpha}}_{{u}_j}=0$. 
Therefore,
$\mfk{D}_3({\tilde{\alpha}},\beta;j,{\tilde{i}})=0\implies S^{\tilde{\alpha}}_{{u}_j}=0\;or\;S^\beta_{{\phi}_{\tilde{i}}}=0$.

Consider the following subcases:

\textbf{subcase 2.1.} Let $S^\beta_{{\phi}_{\tilde{i}}}\ne0$ for some $\beta\ne{\tilde{\alpha}}$. 
Then $\mfk{D}_3({\tilde{\alpha}},\beta;j,{\tilde{i}})=0$ implies $S^{\tilde{\alpha}}_{{u}_j}=0\;\forall j\in\{1,2,3\}$. i.e., $S^{\tilde{\alpha}}$ is independent of ${{\bsym{u}}}$. 
Therefore ${{A}}_{k\gamma l\eta}=0$ if any one of $\gamma,\eta$ equals ${\tilde{\alpha}}$.

\textbf{subcase 2.2.} Let $S^{\tilde{\alpha}}_{{u}_j}\ne0$ for some ${\tilde{\alpha}},j\in\{1,2,3\}$. 
Then $\mfk{D}_3({\tilde{\alpha}},\beta;j,{\tilde{i}})=0$ implies $S^\beta_{{\phi}_{\tilde{i}}}=0\;\forall\beta\in\{1,2,3\}.$ i.e., $S^\beta$ is independent of ${\phi}_{\tilde{i}}$ for all $\beta$. 
Therefore ${{B}}_{k\gamma l\eta}=0$ if any one of $k,l$ equals ${\tilde{i}}$.

Thus we conclude the following:

\quad 1. If all $S^{\alpha}$ depends on all components of ${{\bsym{u}}}$ and ${\phi}_i$ then
$
{\mathbb{A}}=0={\mathbb{B}}.
$

\quad 2. If $S^{\alpha}_{{\phi}_i}=0$ for some ${\alpha}, i\in\{1,2,3\}$ then $S^{\alpha}$ is independent of ${{\bsym{u}}}$ or $S^\beta$ is independent of ${\phi}_i$ for all $\beta$.

\quad 3. 
By interchanging the role of ${{\bsym{u}}}$ and ${\bsym{\phi}}$ in \textbf{case 2} thus far,
if $S^{\alpha}_{{u}_i}=0$ for some ${\alpha},i\in\{1,2,3\}$ then $S^{\alpha}$ is independent of ${\bsym{\phi}}$ or $S^\beta$ is independent of ${u}_i$ for all $\beta$.

Under \textbf{subcase 2.1}, consider two subcases as follows:

\textbf{subcase 2.1.1.} All other partial derivatives are non-zero. Then By first conclusion, ${{A}}_{i\gamma j\eta}=0$ whenever $\gamma\ne{\tilde{\alpha}}\ne\eta$. Therefore by the result in \textbf{subcase 2.1}, ${\mathbb{A}}=0$.

\textbf{subcase 2.1.2.} Any one of other partial derivatives is zero. Then we can apply second or third conclusion. Thus applying the conclusions repeatedly, after a finite number of steps we find that either ${\mathbb{A}}=0={\mathbb{B}}$ 
or $S^{\alpha}$ is independent of ${{\bsym{u}}}$ for all ${\alpha}$ 
or $S^{\alpha}$ is independent of ${\bsym{\phi}}$ for all ${\alpha}$.
When all $S^{\alpha}$'s are independent of ${{\bsym{u}}}$, by \eqref{E:ics}${}_1$, ${\mathbb{A}}=0$. And when all $S^{\tilde{\alpha}}$'s are independent of ${\bsym{\phi}}$, by \eqref{E:ics}, ${\mathbb{B}}=0$. Our claim is in this case also ${\mathbb{A}}=0$.
To prove the claim, let us choose ${\bsym{\phi}}\equiv0$. Then RHS (right hand side) of \eqref{E:icfr}${}_2$ becomes {\em zero} and hence LHS (left hand side). Since LHS determinants do not depend on ${\bsym{\phi}}$, they are {\em zero} for all ${\bsym{\phi}}$. Therefore we get
\begin{align}
e_{ijs}{\phi}_s{{A}}_{ijkl}&=0\label{E:icsv}
\end{align}
We choose ${\bsym{\phi}}(\bsym{x})=({\phi}_1(\bsym{x}),{\phi}_2(\bsym{x}),{\phi}_3(\bsym{x}))=(1,0,0)$. Then from \eqref{E:icsv},
\begin{align*}
e_{321}{{A}}_{32kl}+e_{231}{{A}}_{23kl}=0&\quad1\le k,l\le3\\
\implies {{A}}_{23kl}={{A}}_{32kl}
\end{align*}
Similarly taking different ${\bsym{\phi}}$'s we get
${{A}}_{ijkl}={{A}}_{jikl},$ and
${{A}}_{ijkl}=-{{A}}_{kjil}=-{{A}}_{jkil}={{A}}_{ikjl}.$
Therefore $-{{A}}_{ijkl}={{A}}_{kjil}={{A}}_{kijl}={{A}}_{ikjl}={{A}}_{ijkl},$
$\implies{{A}}_{ijkl}=0, \text{i.e.,}{\mathbb{A}}=0.$

Thus for all cases it is necessary that 
\begin{align}
{\mathbb{A}}=0.
\label{E:cndAzero}\end{align}
Clearly, it is not necessary that ${\mathbb{B}}$ is $0$. 
Thus the necessary 
conditions for null Lagrangian of the form \eqref{E:lg}, in centrosymmetric case (note that we assume ${{B}}_{klij}={{B}}_{ijkl}$) {\em and within the family stated by H. Rund \cite{Rund66}}, are
${{B}}_{ilkj}=-{{B}}_{ijkl}, {{B}}_{kjil}=-{{B}}_{ijkl}, \forall i, j, k, l,$ and ${{B}}_{ijkl}=0,$ if $i=k$ or $j=l$, and, most importantly, ${{\mathbb{A}}}=0.$
It is emphasized that the last condition completely removes any contribution from the displacement field in a possible expression for a null Lagrangian.

\begin{remark}
Although we briefly the same in the context of Claim \ref{claim1}, the sufficiency of \eqref{E:cndAzero} and \eqref{E:cnds}${}_2$, so that \eqref{E:lg} is a null Lagrangian, can be easily verified by a direct calculation. Let
\begin{align*}
F&:=\frac{1}{2}{\phi}_{i,j}{{B}}_{ijkl}{\phi}_{k,l}.
\end{align*}
Let $\mathscr E_i$ be the components of the Euler operator for $i=1,\dots,6.$ Then $\mathscr E_i(F)\equiv0$ for $i=1,2,3$ since $F$ is independent of ${{\bsym{u}}}$ and $\nabla{{\bsym{u}}}$. For $i=4,5,6$,
\[
\mathscr E_i(F)=\frac{\partial}{\partial x_j}\left(F_{{\phi}_{i,j}}\right),
\text{where }
F_{{\phi}_{i,j}}={{B}}_{ijkl}{\phi}_{k,l}.
\]
Therefore,
by 
\eqref{E:cnds}${}_2$,
\[
\frac{\partial}{\partial x_j}\left(F_{{\phi}_{i,j}}\right)=
{{B}}_{ijkl}{\phi}_{k,lj}=
\frac{1}{2}({{B}}_{ijkl}+{{B}}_{ilkj}){\phi}_{k,lj}=0.
\]
As $\mathscr E_k(F)\equiv0$ for $k=1,\dots,6$, hence $F$ is a null Lagrangian.
\end{remark}

\begin{remark}
In contrast to the case of linear elasticity above null Lagrangian can be interpreted as stored energy functionals in the linear micropolar theory.
\end{remark}

\subsection{Splitting of stored energy using null terms}

Let us decompose the fourth order elastic tensor ${{\mathbb{B}}}$ as
\[
{{\mathbb{B}}}=\hat{{\mathbb{B}}}+\tilde{{\mathbb{B}}}+\mathring{{\mathbb{B}}}
\]
\begin{align}
\text{where }
\hat{{B}}_{ijkl}:=\frac{1}{4}({{B}}_{ijkl}+{{B}}_{klij}+{{B}}_{ilkj}+{{B}}_{kjil}),\quad
\tilde{{B}}_{ijkl}:=\frac{1}{4}({{B}}_{ijkl}+{{B}}_{klij}-{{B}}_{ilkj}-{{B}}_{kjil}),
\end{align}
and
\begin{align}
\mathring{{B}}_{ijkl}=\frac{1}{2}({{B}}_{ijkl}-{{B}}_{klij}).
\end{align}
Then clearly $\tilde{{\mathbb{B}}}$ satisfies \eqref{E:cndf}--\eqref{E:cndl}. Due to
\eqref{symm}${}_2$, we have $\mathring{{\mathbb{B}}}=\mathbb{0}$
(which are total 45 conditions).
The remaining are 36 arbitrary constants in $\hat{{\mathbb{B}}}+\tilde{{\mathbb{B}}}$ and are split between $\hat{{\mathbb{B}}}$ and $\tilde{{\mathbb{B}}}.$ But
$\hat{{\mathbb{B}}}$ and $\tilde{{\mathbb{B}}}$ individually satisfy the relation that swaps the indices $(ij)$ and $(kl)$.
By symmetry,
there are $9+18\times2=45$ zero entries in $\tilde{{\mathbb{B}}}$.
Note that 
there are $18$ independent, possibly non-zero, entries in $\tilde{{\mathbb{B}}}$.
With this decomposition, the energy functional \eqref{E:teC} is split as follows:
\begin{align}
{{\mcl{V}}}&=\widehat{{{\mcl{V}}}}+\widetilde{{{\mcl{V}}}},\quad
\widehat{{{\mcl{V}}}}({{{\bsym{u}}}},{{\bsym{\phi}}}):=\frac{1}{2}\int_\Omega ({{\upepsilon}}{\cdot}{{\mathbb{A}}}[{{\upepsilon}}]+{{\upkappa}}{\cdot}\hat{{\mathbb{B}}}[{{\upkappa}}])dv,\label{E:slg}
\quad
\widetilde{{{\mcl{V}}}}({{{\bsym{u}}}},{{\bsym{\phi}}}):=\frac{1}{2}\int_\Omega {{\upkappa}}{\cdot}\tilde{{\mathbb{B}}}[{{\upkappa}}]dv,
\end{align}
where $\widehat{{{\mcl{V}}}}$ cannot be a null functional (i.e. with its Lagrangian as null) unless ${\mathbb{A}}=\mathbb{0}=\hat{{\mathbb{B}}}$. 
In this context, it is noted that the Lagrangian in \eqref{E:slg}${}_2$ can be null only if ${{\upepsilon}}{\cdot}{{\mathbb{A}}}[{{\upepsilon}}]=0={{\upkappa}}{\cdot}\hat{{\mathbb{B}}}[{{\upkappa}}]$, since if they are non-zero, ${{A}}_{ilkj}={{A}}_{ijkl}, 
\hat {{B}}_{ilkj}=\hat {{B}}_{ijkl}$ whch violate \eqref{E:cnds}.
Following \cite{Lancia95} (see also \cite{Carillo02}), the part $\tilde{{\mathbb{B}}}$ of the elasticity tensor affects equilibrium through boundary term only and the equilibrium equation on the boundary takes the form (recall \eqref{E:cauchy})
\begin{align*}
\mcl{M}[{\bsym{\phi}}]=\bsym{m}_0\text{ in $\partial\Omega$,}\quad
\text{ where }
\mcl{M}[{\bsym{\phi}}]&:={{\mathbb{B}}}[\nabla{\bsym{\phi}}]\bsym{n}=\hat{{\mathbb{B}}}[\nabla{\bsym{\phi}}]\bsym{n}+\tilde{\bsym{m}}[{\bsym{\phi}}],\quad
\tilde{\bsym{m}}[{\bsym{\phi}}]:=\tilde{{\mathbb{B}}}[\nabla{\bsym{\phi}}]\bsym{n},
\end{align*}
where $\bsym{n}$ is the outward unit normal on the boundary of the body $\partial\Omega$ while
$\bsym{m}_0$ is the externally applied surface moment (per unit area).
This leads to the existence of a surface potential \cite{Lancia95,Carillo02} (for moments) corresponding to $\tilde{\bsym{m}}[{\bsym{\phi}}]$, that is,
\begin{equation}\label{E:spt}
\tilde{\mcl{M}}[{\bsym{\phi}}]=\frac{1}{2}\int_{\partial\Omega}\tilde{{\mathbb{B}}}_{ijkl}{\phi}_{i,j}{\phi}_{k}n_ld\sigma.
\end{equation}
In order that $\tilde{{\mathbb{B}}}$ (that satisfies the symmetry \eqref{symm}${}_2$) vanishes we require that
$\tilde{{B}}_{ijkl}
=0;$
In other words,
\[
\begin{bmatrix}
\tilde{{B}}_{2233}&\tilde{{B}}_{1233}&\tilde{{B}}_{1322}\\
\tilde{{B}}_{2133}&\tilde{{B}}_{1133}&\tilde{{B}}_{2311}\\
\tilde{{B}}_{3122}&\tilde{{B}}_{3211}&\tilde{{B}}_{1122}\\
\end{bmatrix}=0,\quad
\begin{bmatrix}
\tilde{{B}}_{3113}&\tilde{{B}}_{3123}&\tilde{{B}}_{2132}\\
\tilde{{B}}_{3213}&\tilde{{B}}_{1221}&\tilde{{B}}_{1231}\\
\tilde{{B}}_{2312}&\tilde{{B}}_{1321}&\tilde{{B}}_{2332}\\
\end{bmatrix}=0,
\]
provide a set of $18$ additional symmetry relations. These can be interpreted as analogues to the six Cauchy's relations in the theory of linear elasticity \cite{cT94,Lo27,Lancia95,FI02,FI13}. 
\begin{remark}
In the above scenario ${\mathbb{B}}$ lacks any minor symmetries unless the space of rotation vectors is restricted to gradients of scalar functions (say, curl-free vector fields on simply connected domains so that ${\phi}_{i,j}={\phi}_{j,i}$). This is reflected in the fact that the number of conditions is 18 rather than the classical case of six \cite{cT94}.
The restriction to curl-free vector fields can be written as that when $\psi=0$ or $\chi=0$ in the Clebsch representation of ${\bsym{\phi}}$.
\end{remark}

\subsection{Non-chiral isotropic model}
In the case of an isotropic material, ${{\mathbb{A}}}$ and ${{\mathbb{B}}}$ take the form,
\[
\begin{cases}
{{A}}_{ijkl}={{\lambda}}\delta_{ij}\delta_{kl}+({{\mu}}+{{\kappa}})\delta_{ik}\delta_{jl}+{{\mu}}\delta_{il}\delta_{jk}~&={{A}}_{klij}\\
{{B}}_{ijkl}={{\upbeta_1}}\delta_{ij}\delta_{kl}+{{\upbeta_2}}\delta_{ik}\delta_{jl}+{{\upbeta_3}}\delta_{il}\delta_{jk}&={{B}}_{klij}
\end{cases}
\]
where ${{\mu}},{{\kappa}},{{\lambda}},{{\upbeta_1}},{{\upbeta_2}},{{\upbeta_3}}$ are constants. Therefore,
\begin{align*}
\quad{{\upsigma}}_{ij}&={{A}}_{ijkl}{{\upepsilon}}_{kl}=({{\mu}}+{{\kappa}}){{\upepsilon}}_{ij}+{{\mu}}{{\upepsilon}}_{ji}+{{\lambda}}\delta_{ij}{{\upepsilon}}_{ll},\\
\text{and 
}\quad{{\upmu}}_{ij}&={{B}}_{ijkl}{{\upkappa}}_{kl}={{\upbeta_2}}{{\upkappa}}_{ij}+{{\upbeta_3}}{{\upkappa}}_{ji}+{{\upbeta_1}}\delta_{ij}{{\upkappa}}_{ll}
\end{align*}
Using the fact that
\begin{equation}\label{E:pdse}{{\upepsilon}}{\cdot}{{\mathbb{A}}}[{{\upepsilon}}]
=({{\mu}}+{{\kappa}}){{\upepsilon}}{\cdot}{{\upepsilon}}+{{\mu}}{{\upepsilon}}{\cdot}{{\upepsilon}}^{\top}+{{\lambda}}(\tr{{\upepsilon}})^2\end{equation}
is positive, and similar condition for ${\upkappa}\cdot{{\mathbb{B}}}[{\upkappa}]$,
we get the well known \cite{Er66,EK76,Eringen99} necessary and sufficient conditions for positive definite stored energy are
\begin{equation}
{{\kappa}}>0, 2{{\mu}}+{{\kappa}}>0, 3{{\lambda}}+2{{\mu}}+{{\kappa}}>0, \quad
3{{\upbeta_1}}+{{\upbeta_2}}+{{\upbeta_3}}>0, {{\upbeta_2}}+{{\upbeta_3}}>0, {{\upbeta_2}}-{{\upbeta_3}}>0.
\end{equation}
From the perspective of null Lagrangian, we already have the restriction
\begin{equation}
\label{E:cif}
{{\mu}}={{\kappa}}={{\lambda}}=0,
\end{equation}
as ${{\mathbb{A}}}$ needs to be $0$ within the class of null Lagrangians due to H. Rund \cite{Rund66}.
Let us now simplify the expression involving ${\mathbb{B}}$.
Here
\begin{align}
\tilde{{B}}_{ijkl}&=\frac{1}{2}({{B}}_{ijkl}-{{B}}_{ilkj})
=\frac{1}{2}({{\upbeta_3}}-{{\upbeta_1}})(\delta_{jk}\delta_{il}-\delta_{ij}\delta_{kl}).
\end{align}
Therefore,
$\tilde{{\mathbb{B}}}[{\upkappa}]=\frac{1}{2}({{\upbeta_3}}-{{\upbeta_1}})[{\upkappa}^{\top}-(\tr{\upkappa})\mbf{I}]$,
so that
\begin{align}
{\upkappa}\cdot\tilde{{\mathbb{B}}}[{\upkappa}]&=\frac{1}{2}({{\upbeta_3}}-{{\upbeta_1}})\left[{\upkappa}\cdot{\upkappa}^{\top}-(\tr{\upkappa})^2\right].
\label{E:tda}
\end{align}
We denote the deviatoric part
\[
\dev{{\upkappa}}:={{\upkappa}}-\frac{1}{3}(\tr{{\upkappa}})\mbf{I}. 
\]
Then
$\hat{{{\upkappa}}}{\cdot}\hat{{{\upkappa}}}=(\dev\hat{{{\upkappa}}})^2+\frac{1}{3}(\tr{{\upkappa}})^2$
since $\tr\hat{{{\upkappa}}}=\tr{{\upkappa}}$ and $(\dev\hat{{{\upkappa}}}){\cdot}\mbf{I}
=0$. 
Also
\begin{align}
\hat{{{\upkappa}}}{\cdot}\tilde{{{\upkappa}}}
=0,\quad
{{\upkappa}}{\cdot}{{\upkappa}}
=(\dev\hat{{{\upkappa}}})^2+\frac{1}{3}(\tr{{\upkappa}})^2+\tilde{{{\upkappa}}}{\cdot}\tilde{{{\upkappa}}},
\text{ and }\quad{{\upkappa}}{\cdot}{{\upkappa}}^{\top}
=(\dev\hat{{{\upkappa}}})^2+\frac{1}{3}(\tr{{\upkappa}})^2-\tilde{{{\upkappa}}}{\cdot}\tilde{{{\upkappa}}}.
\label{E:gdgt}\end{align}
Therefore, we get\begin{align*}{{\upkappa}}{\cdot}{{\mathbb{B}}}[{{\upkappa}}]
&=(2{{\mu}}+{{\kappa}})(\dev\hat{{{\upkappa}}})^2+\frac{3{{\lambda}}+2{{\mu}}+{{\kappa}}}{3}(\tr{{\upkappa}})^2+{{\kappa}}\tilde{{{\upkappa}}}{\cdot}\tilde{{{\upkappa}}},
\end{align*} 
and
\begin{align}
{\upkappa}\cdot\hat{{\mathbb{B}}}[{\upkappa}]
&=\frac{2{{\upbeta_2}}+{{\upbeta_3}}+{{\upbeta_1}}}{2}(\dev\hat{{\upkappa}})^2+\frac{{{\upbeta_2}}+2{{\upbeta_3}}+2{{\upbeta_1}}}{3}(\tr{\upkappa})^2+\frac{2{{\upbeta_2}}-{{\upbeta_3}}-{{\upbeta_1}}}{2}\tilde{{\upkappa}}\cdot\tilde{{\upkappa}},
\label{Bhateqn}
\end{align}
by 
\eqref{E:gdgt}. For a null Lagrangian the non-trivial expression corresponds to ${\upkappa}\cdot{{\mathbb{B}}}[{\upkappa}]$, while its symmetric part leads to ${\upkappa}\cdot\hat{{B}}[{\upkappa}]=0$. Therefore, using \eqref{Bhateqn}, we have,
\begin{align*}
2{{\upbeta_2}}+{{\upbeta_3}}+{{\upbeta_1}}=0,\quad
{{\upbeta_2}}+2{{\upbeta_3}}+2{{\upbeta_1}}=0,\quad
2{{\upbeta_2}}-{{\upbeta_3}}-{{\upbeta_1}}=0
\end{align*}
which gives
\begin{equation}
\label{E:cis}
{{\upbeta_2}}=0\quad\text{and}\quad{{\upbeta_2}}-{{\upbeta_3}}-{{\upbeta_1}}=0
\end{equation}
and thus
\[
\boldsymbol\epsilon\cdot\tilde{\mathbb A}[\boldsymbol\epsilon]+{\upkappa}\cdot
{{\mathbb{B}}}[{\upkappa}]={\upkappa}\cdot\tilde{{\mathbb{B}}}[{\upkappa}]=({{\upbeta_2}}-{{\upbeta_3}})\left[(\tr{\upkappa})^2-{\upkappa}\cdot{\upkappa}^{\top}\right].
\]
Indeed, \eqref{E:cis} and \eqref{E:cif} are the necessary and sufficient conditions for null Lagrangian in isotropic case without chiral term ${\mathbb{D}}$ and within the class of null Lagrangians due to H. Rund \cite{Rund66}.
And the Lagrangian becomes
\begin{equation}
{\Psi}(\boldsymbol\epsilon,{\upkappa})=C_0\left[(\tr{\upkappa})^2-{\upkappa}\cdot{\upkappa}^{\top}\right]
\label{PsicentroIso}
\end{equation}
for some constant $C_0$.
\begin{remark}
Using the definition of the second invariant of a second order tensor, i.e.,
$\mcl{I}_2(\mbf{A})=\frac{1}{2}((\tr\mbf{A})^2-\tr\mbf{A}^2),$
we have an equivalent expression for \eqref{PsicentroIso}, i.e.,
${{\Psi}}({{\upepsilon}},{{\upkappa}})=C_0\mcl{I}_2({{\upkappa}}).$
It can be easily verified by direct calculation too that ${{\Psi}}$ is a null Lagrangian.
\end{remark}

\subsection{Chiral isotropic model}
Isotropy (which insists on a symmetry relative to all orthogonal tensors) implies centrosymmetry so the expressions discussed earlier may need to be modified with non-zero ${\mathbb{D}}$. In non-centrosymmetric case of symmetry similar to `Isotropy' (relative to all rotation tensors) does not include mirror-reflection, i.e., inversion, (sometimes called hemitropy) has additonal parameters.
In the case of general chiral `isotropic' 
micropolar elastic material, three additional material constants appear as compared to the non-chiral isotropic micropolar elastic material \cite{wN86,LakesB82,Lakes2001,
JOS11,NGS062}.
These additional material parameters allows the incorporation of a change in the signs of the corresponding terms depending on the handedness of the microstructure (owing to the presence of axial vector, rotation axis, cross product, etc). 
For hemitropic (chiral isotropic) material, ${{\mathbb{A}}}$, ${{\mathbb{B}}}$, ${{\mathbb{D}}}$ takes the form \cite{wJN77,wN86,Sharma2004,JOS11}
\begin{equation}
\label{E:nid}
\begin{cases}
{{A}}_{ijkl}={{\lambda}}\delta_{ij}\delta_{kl}+({{\mu}}+{{\kappa}})\delta_{ik}\delta_{jl}+{{\mu}}\delta_{il}\delta_{jk}&={{A}}_{klij}\\
{{B}}_{ijkl}={{\upbeta_1}}\delta_{ij}\delta_{kl}+{{\upbeta_2}}\delta_{ik}\delta_{jl}+{{\upbeta_3}}\delta_{il}\delta_{jk}&={{B}}_{klij}\\
{{D}}_{ijkl}={{\zeta}}\delta_{ij}\delta_{kl}+({{\nu}}+{{\rho}})\delta_{ik}\delta_{jl}+{{\nu}}\delta_{il}\delta_{jk}&={{D}}_{klij}.
\end{cases}
\end{equation}
In contrast to the classical isotropic linear elasticity which is characterized by two material constants, so called L\'{a}me, the non-centrosymmetric `isotropic' micropolar material possesses nine independent material constants.
The stress ${{\upsigma}}$ and couple stress ${{\upmu}}$ are given by \eqref{sigmmunc}.
For chiral isotropic case,
${{\upsigma}}_{ij}={{\lambda}}\delta_{ij}{{\upepsilon}}_{ll}+({{\mu}}+{{\kappa}}){{\upepsilon}}_{ij}+{{\mu}}{{\upepsilon}}_{ji}+{{\zeta}}\delta_{ij}{{\upkappa}}_{ll}+({{\nu}}+{{\rho}}){{\upkappa}}_{ij}+{{\nu}}{{\upkappa}}_{ji},
{{\upmu}}_{ij}={{\upbeta_1}}\delta_{ij}{{\upkappa}}_{ll}+{{\upbeta_2}}{{\upkappa}}_{ij}+{{\upbeta_3}}{{\upkappa}}_{ji}+{{\zeta}}\delta_{ij}{{\upepsilon}}_{ll}+({{\nu}}+{{\rho}}){{\upepsilon}}_{ij}+{{\nu}}{{\upepsilon}}_{ji}.$
For positive definite stored energy, 
\begin{align}
&{{\kappa}}>0,\;2{{\mu}}+{{\kappa}}>0,\;3{{\lambda}}+2{{\mu}}+{{\kappa}}>0,\label{E:npdf}\\
&3{{\upbeta_1}}+{{\upbeta_3}}+{{\upbeta_2}}>0,\;{{\upbeta_3}}+{{\upbeta_2}}>0,\;{{\upbeta_2}}-{{\upbeta_3}}>0.\label{E:npds}
\end{align}
According to \eqref{symm1}${}_1$ and \eqref{symm1}${}_3$, we require that
${{\zeta}}\delta_{ij}\delta_{kl}+({{\nu}}+{{\rho}})\delta_{ik}\delta_{jl}+{{\nu}}\delta_{il}\delta_{jk}
=-{{\zeta}}\delta_{kj}\delta_{il}-({{\nu}}+{{\rho}})\delta_{ki}\delta_{jl}-{{\nu}}\delta_{kl}\delta_{ji},$
so that
\begin{align}
{{\nu}}+{{\zeta}}={{\nu}}+{{\rho}}=0.
\end{align}
Hence, \eqref{E:nid}${}_3$ implies
${{D}}_{ijkl}={{\zeta}}(\delta_{ij}\delta_{kl}-\delta_{il}\delta_{jk}),$ so that
\begin{align}
{{\upepsilon}}{\cdot}{{\mathbb{D}}}[{{\upkappa}}]={{\zeta}}\left[(\tr{{\upepsilon}})(\tr{{\upkappa}})-{{\upepsilon}}{\cdot}{{\upkappa}}^{\top}\right].
\end{align}
Similarly,
\eqref{E:nid}${}_1$
and
\eqref{E:nid}${}_2$ implies 
\begin{align}
{{\mu}}+{{\lambda}}={{\mu}}+{{\kappa}}=0,
\text{ and }
{{\upbeta_3}}+{{\upbeta_1}}={{\upbeta_2}}=0,
\end{align}
so that
${{A}}_{ijkl}={{\lambda}}(\delta_{ij}\delta_{kl}-\delta_{il}\delta_{jk}),
{{B}}_{ijkl}={{\upbeta_1}}(\delta_{ij}\delta_{kl}-\delta_{il}\delta_{jk}),$
and
\begin{align}
{{\upepsilon}}{\cdot}{{\mathbb{A}}}[{{\upepsilon}}]={\lambda}\left[(\tr{{\upepsilon}})^2-{{\upepsilon}}{\cdot}{{\upepsilon}}^{\top}\right],\quad
{{\upkappa}}{\cdot}{{\mathbb{B}}}[{{\upkappa}}]={\upbeta_1}\left[(\tr{{\upkappa}})^2-{{\upkappa}}{\cdot}{{\upkappa}}^{\top}\right].
\end{align}
Also,
\begin{align*}
{{\upsigma}}_{ij}=-{{\lambda}}{{\upepsilon}}_{ji}+{{\lambda}}\delta_{ij}{{\upepsilon}}_{ll}-{{\zeta}}{{\upkappa}}_{ji}+{{\zeta}}\delta_{ij}{{\upkappa}}_{ll},\\
{{\upmu}}_{ij}=-{{\upbeta_1}}{{\upkappa}}_{ji}+{{\upbeta_1}}\delta_{ij}{{\upkappa}}_{ll}
-{{\zeta}}{{\upepsilon}}_{ji}+{{\zeta}}\delta_{ij}{{\upepsilon}}_{ll}.
\end{align*}
Let
\begin{align*}
F&:=\frac{1}{2}({u}_{i,j}+{{\mathtt{E}}}_{ijs}{\phi}_s){{A}}_{ijkl}({u}_{k,l}+{{\mathtt{E}}}_{kls}{\phi}_s)
+\frac{1}{2}{\phi}_{i,j}{{B}}_{ijkl}{\phi}_{k,l}
+({u}_{i,j}+{{\mathtt{E}}}_{ijs}{\phi}_s){{D}}_{ijkl}{\phi}_{k,l}.
\end{align*}
Let $\mathscr E_i$ be the components of the Euler operator for $i=1,\dots,6.$ 
Then for $i=1, 2, 3$, $\mathscr E_i(F)=0$ gives
\[
{{\upsigma}}_{ij,j}
=-{{\lambda}}{{\upepsilon}}_{ji,j}+{{\lambda}}{{\upepsilon}}_{ll,i}-{{\zeta}}{{\upkappa}}_{ji,j}+{{\zeta}}{{\upkappa}}_{ll,i}=0,
\]
and for $i=4, 5, 6$, $\mathscr E_i(F)=0$ gives
\[
{{\upmu}}_{ij,j}-{{\mathtt{E}}}_{ijk} {{\upsigma}}_{jk}
=-{{\upbeta_1}}{{\upkappa}}_{ji,j}+{{\upbeta_1}}{{\upkappa}}_{ll,i}-{{\zeta}}{{\upepsilon}}_{ji,j}+{{\zeta}}{{\upepsilon}}_{ll,i}=0,
\]
where ${{\upepsilon}}_{ll}={u}_{l,l}$ and ${{\upkappa}}_{ll}={\phi}_{l,l}$.
In fact,
\[
-{{\lambda}}({u}_{j,ij}+{{\mathtt{E}}}_{jis}{\phi}_{s,j})+{{\lambda}}{u}_{l,li}-{{\zeta}}{\phi}_{j,ij}+{{\zeta}}{\phi}_{l,li}=0, \quad
-{{\upbeta_1}}{\phi}_{j,ij}+{{\upbeta_1}}{\phi}_{l,li}-{{\zeta}}({u}_{j,ij}+{{\mathtt{E}}}_{jis}{\phi}_{s,j})+{{\zeta}}{u}_{l,li}=0,
\]
i.e.,
\[
{{\lambda}}{{\mathtt{E}}}_{jis}{\phi}_{s,j}=0, \quad
{{\zeta}}{{\mathtt{E}}}_{jis}{\phi}_{s,j}=0.
\]
As $\mathscr E_k(F)\equiv0$ for $k=1,\dots,6$, hence $F$ is a null Lagrangian if and only if ${\lambda}={\zeta}=0$.
Thus, we do not find any other null Lagrangians besides those which are isotropic.

\begin{remark}
On the other hand if the space of rotation vector field is restricted to curl free vector fields then the null Lagrangians with non-zero values of ${\lambda}$ and ${\zeta}$ are also admissible.
\end{remark}

\section{Generalized elasticity for quasicrystals}

We seek the null Lagrangians in the framework of the generalized elasticity theory of quasicrystals \cite{BL75,ADL92}.
In general, an $(n-3)$-dimensional quasicrystal can be generated by the projection of an $n$-dimensional periodic structure to the 3-dimensional physical space ($n=4,5,6$). Typically, the $n$-dimensional hyperspace $E^{n}$ is seen as the direct sum of two orthogonal subspaces (phonons and phasons, respectively), 
\begin{align}
\label{E-deco}
E^{n}=E_{{p}}^3\oplus E_{{s}}^{(n-3)}.
\end{align}
Let us restrict ourselves for illustrative purpose the case when $n=6$. 
Throughout this section, we denote the phonon fields by $(\cdot)^{{p}}$ and the phason fields by $(\cdot)^{{s}}$. It is important to note that all quantities (phonon and phason fields) depend on the referential (material) location ${\boldsymbol{x}} \in {{\mathbb{R}}}^3$.
In the theory of quasicrystals, the equilibrium conditions are of the form (see, e.g.,~\cite{Ding1993,Hu2000})
\begin{align}
{{\sigma}}^{{p}}_{ij,j}+f_i^{{p}}=0,\quad
{{\sigma}}^{{s}}_{ij,j}+f_i^{{s}}=0,
\label{EC12}
\end{align}
where ${{\sigma}}^{{p}}_{ij}$ and ${{\sigma}}^{{s}}_{ij}$ are {\em the phonon and phason stress tensors}, respectively, and $f_i^{{p}}$ is {\em the conventional (phonon) body force density} and $f_i^{{s}}$ is {\em a generalized (phason) body force density}. The comma denotes differentiation with respect to the material coordinates. We note that the phonon stress tensor is symmetric, ${{\sigma}}^{{p}}_{ij}={{\sigma}}^{{p}}_{ji}$, while the phason stress tensor is allowed to be asymmetric, ${{\sigma}}^{{s}}_{ij}\neq{{\sigma}}^{{s}}_{ji}$ (see, e.g., \cite{Ding1993}).
The phonon and phason distortion tensors, ${{\upgamma}}_{kl}$ and ${{\upkappa}}_{kl}$, are defined as the spatial gradients of ${u}_{k}^{{p}}$ and ${u}_{k}^{{s}}$, respectively
\begin{align}
\label{B-u}
{{\upgamma}}_{kl}={\up}_{k,l},\quad
{{\upkappa}}_{kl}={\us}_{k,l}.
\end{align}
The constitutive relations between the stress tensors and the distortion tensors are
\begin{align}
{{\sigma}}^{{p}}_{ij}={C}_{ijkl}{{\upgamma}}_{kl}+{D}_{ijkl}{{\upkappa}}_{kl},\quad
{{\sigma}}^{{s}}_{ij}={D}_{klij}{{\upgamma}}_{kl}+{E}_{ijkl}{{\upkappa}}_{kl}.
\label{CRe12}
\end{align}
In the above backdrop, we investigate the null Lagrangians of the following type. \\
With $\up:\Omega\to\mbb{R}^3$ and $\us:\Omega\to\mbb{R}^3$, the Lagrangian is given by
\begin{align}
{\Psi}({\nabla} \up,{\nabla} \us)={{\mathcal{E}}}({{\upgamma}},{{\upkappa}})=\frac{1}{2}{{\upgamma}}{\cdot} {\mbb{C}}[{{\upgamma}}]+\frac{1}{2}{{\upkappa}}{\cdot} {\mathbb{E}}[{{\upkappa}}]+{{\upgamma}}{\cdot} {{\mathbb{D}}}[{{\upkappa}}],
\label{quasielast}
\end{align}
\[
\text{where }
{\upgamma}={\nabla}\up,{\upkappa}={\nabla}\us,
\]
and the three constitutive tensors possess the symmetries
\begin{align}
\label{C-Sym}
{C}_{ijkl}={C}_{klij}={C}_{ijlk}={C}_{jikl},
\quad
{D}_{ijkl}={D}_{jikl},
\quad
{E}_{ijkl}={E}_{klij}.
\end{align}
Here, ${C}_{ijkl}$ is the tensor of the elastic moduli of phonons, ${E}_{ijkl}$ is the tensor of the elastic moduli of phasons, and ${D}_{ijkl}$ is the tensor of the elastic moduli of the phonon-phason coupling.
The symmetries of the tensors of the elastic constants can be simplified according to the specific type of the considered quasicrystal (see e.g. \cite{Hu2000, Fan11}). 
After we substitute 
\eqref{CRe12}
and \eqref{B-u} into 
\eqref{EC12}, we obtain
\begin{align}
\label{EOM1}
&{C}_{ijkl} {u}_{k,lj}^{{p}}+{D}_{ijkl} {u}_{k,lj}^{{s}}=-f_i^{{p}},\\
\label{EOM2}
&{D}_{klij} {u}_{k,lj}^{{p}}+ {E}_{ijkl} {u}_{k,lj}^{{s}}=-f_i^{{s}}.
\end{align}

In the context of the Euler--Lagrange equation \eqref{E:ele}, here the dependent variable is $\bsym{y}=({{\bsym{u}}},\bsym{v})$, where ${{\bsym{u}}}$ and $\bsym{v}$ both have dimension $3$, so that $N=6$ here. Rewriting 
\eqref{E:flg} for ${{\bsym{u}}}$ and $\bsym{v}$ we get,
\begin{multline}
\label{E:fnl}
{\Psi}(\bsym{x},{{\bsym{u}}},\bsym{v},\nabla{{{\bsym{u}}}},\nabla{\bsym{v}})=\frac{1}{2}\mathscr{D}_1(\alpha,\beta;i,j){u}_{i,\alpha}{u}_{j,\beta}+\frac{1}{2}\mathscr{D}_2(\alpha,\beta;i',j')v_{i,\alpha}v_{j,\beta}\\
+\mathscr{D}_3(\alpha,\beta;i,j'){u}_{i,\alpha}v_{j,\beta}+\mathscr{D}_1(\alpha;i){u}_{i,\alpha}+\mathscr{D}_2(\alpha;i')v_{i,\alpha}+\frac{1}{2}\mathscr{D}(0;0)
\end{multline}
where $1\le i,j\le3$ and $i'=i+3,\,j'=j+3$.\\
Also rewriting given Lagrangian in terms of ${u}_{i,\alpha}=\nabla{{\bsym{u}}}$ and $v_{i,\alpha}=\nabla\bsym{v}$ we get,
\begin{equation}
\label{E:fp}
{\Psi}(\bsym{x},{{\bsym{u}}},\bsym{v},\nabla{{\bsym{u}}},\nabla\bsym{v})=\frac{1}{2}{C}_{i\alpha j\beta}{u}_{i,\alpha}{u}_{j,\beta}+\frac{1}{2}{E}_{i\alpha j\beta}v_{i,\alpha}v_{j,\beta}+{D}_{i\alpha j\beta}{u}_{i,\alpha}v_{j,\beta}.
\end{equation}
By comparing \eqref{E:fp} and \eqref{E:fnl} we get,
\begin{alignat}{8}
\begin{vmatrix}
S^\alpha_i&S^\beta_i\\
S^\alpha_j&S^\beta_j
\end{vmatrix}
&=\mathscr{D}_1(\alpha,\beta;i,j)={C}_{i\alpha j\beta}, \quad&
\begin{vmatrix}
S^\alpha_{i'}&S^\beta_{i'}\\
S^\alpha_{j'}&S^\beta_{j'}
\end{vmatrix}
&=\mathscr{D}_2(\alpha,\beta;i',j')={E}_{i\alpha j\beta},\label{E:fs}\\
\begin{vmatrix}
S^\alpha_i&S^\beta_i\\
S^\alpha_{j'}&S^\beta_{j'}
\end{vmatrix}
&=\mathscr{D}_3(\alpha,\beta;i,j')={D}_{i\alpha j\beta},\quad&
\begin{vmatrix}
S^\alpha_i&S^\beta_i\\
S^\alpha_{|\beta}&S^\beta_{|\beta}
\end{vmatrix}
&=\mathscr{D}_1(\alpha;i)=0,\label{E:ffr}\\
\begin{vmatrix}
S^\alpha_{i'}&S^\beta_{i'}\\
S^\alpha_{|\beta}&S^\beta_{|\beta}
\end{vmatrix}
&=\mathscr{D}_2(\alpha;i')=0, \quad&
\begin{vmatrix}
S^\alpha_{|\alpha}&S^\beta_{|\alpha}\\
S^\alpha_{|\beta}&S^\beta_{|\beta}
\end{vmatrix}
&=\mathscr{D}(0;0)=0\label{E:fsx}.
\end{alignat}
These are necessery and sufficient conditions on $\mathbb C,{\mathbb{E}}$ and ${\mathbb{D}}$ so that \eqref{E:fp} becomes null Lagrangian within the family stated by H. Rund \cite{Rund66}. 
In the following we further explore possible explicit conditions on ${\mathbb{C}},{\mathbb{E}}$ and ${\mathbb{D}}$.

From 
\eqref{E:fs} and \eqref{E:ffr}${}_1$ we get the following restrictions on ${\mathbb{C}},{\mathbb{E}}$ and ${\mathbb{D}}$,
\begin{gather}
{C}_{ijkl}={C}_{klij},\;{E}_{ijkl}={E}_{klij},\;{D}_{ijkl}={D}_{klij}\label{E:fsv}\\
{C}_{ijkl}=-{C}_{kjil},\;{E}_{ijkl}=-{E}_{kjil},\;{D}_{ijkl}=-{D}_{kjil}\label{E:fe}\\
{C}_{ijkl}=-{C}_{ilkj},\;{E}_{ijkl}=-{E}_{ilkj},\;{D}_{ijkl}=-{D}_{ilkj}\label{E:fn}
\end{gather}
\begin{align}
and\quad {C}_{ijkl}=0,\;&{E}_{ijkl}=0&\text{if $i=k$ or $j=l$}\label{E:ftn}\\
&{D}_{ijkl}=0&\text{if $j=l$}\label{E:fev}
\end{align}
Also we recall that ${C}_{ijkl}={C}_{jikl}$ and ${D}_{ijkl}={D}_{jikl}$ are given symmetry restrictions. As a result, we get
\begin{gather*}
{C}_{ijkl}=-{C}_{kjil}=-{C}_{jkil}={C}_{ikjl},\\
\therefore\quad {C}_{ijkl}=-{C}_{kjil}=-{C}_{jkil}=-{C}_{jikl}=-{C}_{ijkl}\\
\implies {C}_{ijkl}=0\implies{\mathbb{C}}=0.\\
\text{Similarly,}\quad{\mathbb{D}}=0.
\end{gather*}

Therefore the necessary 
sufficient conditions for null Lagrangian of above type \eqref{quasielast} (with \eqref{C-Sym}) and within the family stated by H. Rund \cite{Rund66} are
\begin{gather}
{\mathbb{C}}=0,\quad
{\mathbb{D}}=0,\\
{E}_{ijkl}={E}_{klij},\quad
{E}_{ijkl}=-{E}_{kjil},\quad
{E}_{ijkl}=-{E}_{ilkj},
\end{gather}
\begin{align}
\text{and}\quad {E}_{ijkl}=0\quad\text{if $i=k$ or $j=l$},
\end{align}
and the null Lagrangian is given by
\begin{align}
{\Psi}(\bsym{x},{{\bsym{u}}},\bsym{v},\nabla{{\bsym{u}}},\nabla\bsym{v})=
\frac{1}{2}{\upkappa}\cdot{\mathbb{E}}[{\upkappa}].
\label{psiquasielas}
\end{align}
\begin{remark}
It can be easily verified by direct calculation that ${\Psi}$ in \eqref{psiquasielas} is indeed a null Lagrangian.
Again, in contrast to the case of linear elasticity above null Lagrangian can be interpreted as stored energy functionals in the linear generalized elasticity theory of quasicrystals.
\end{remark}

\section{Linear electro-magneto-elasticity}
With ${\bsym{u}}:\Omega\to\mbb{R}^3$, ${\varphi}:\Omega\to\mbb{R}$ and ${\psi}:\Omega\to\mbb{R}$, 
assuming that ${{C}}_{ijlk}, {{P}}_{ikl}, {{Q}}_{ikl}, {E}_{il}, {{B}}_{il}, {{A}}_{il}$ are the elastic, piezoelectric, piezomagnetic, dielectric, magnetic permeability and electromagnetic coupling constants, respectively,
the Lagrangian (enthalpy functional) for a linear electro-magneto-elastic material is given by \cite{BCJ1964}
\begin{align}
2H({\bsym{u}},{\bsym{e}},{\bsym{h}})={{C}}_{ijkl}{u}_{i,j}{u}_{k,l}-{E}_{ij}{\varphi}_{,i}{\varphi}_{,j}-{{B}}_{ij}{\psi}_{,i}{\psi}_{,j}\notag\\
+2{{P}}_{kij}{\varphi}_{,k}{u}_{i,j}
+2{{Q}}_{kij}{\varphi}_{,k}{u}_{i,j}
-2{{A}}_{ij}{\varphi}_{,i}{\psi}_{,j}.
\label{E:sp}
\end{align}
where the material constants satisfy the following conditions of symmetry
\[
{{C}}_{ijkl}={{C}}_{jikl}={{C}}_{klij}, 
\quad
{{P}}_{kij}={{P}}_{kji}, 
\quad
{{Q}}_{kij}={{Q}}_{kji}, 
\quad
{E}_{ij}={E}_{ji},
\quad
{{A}}_{ij}={{A}}_{ij},
\quad
{{B}}_{ij}={{B}}_{ji}.
\]
Here, $\delta_{jk}$ is the Kronecker symbol and
summation over repeated indices is implied.
In other words, with ${\bsym{e}}=-{\nabla}{\varphi}$ and ${\bsym{h}}=-{\nabla}{\psi}$,
\begin{align}
H({\bsym{u}},{\bsym{e}},{\bsym{h}})=\frac{1}{2}{{\mathbb{C}}}{\upepsilon}:{\upepsilon}
-\frac{1}{2}{{\mbf{E}}}{\bsym{e}}{\cdot}{\bsym{e}}
-\frac{1}{2}{\mbf{B}}{\bsym{h}}{\cdot}{\bsym{h}}
-{{\mathtt{P}}}{\upepsilon}{\cdot}{\bsym{e}}
-{{\mathtt{Q}}}{\upepsilon}{\cdot}{\bsym{h}}
-{\mbf{A}}{\bsym{e}}{\cdot}{\bsym{h}}.
\label{E:spalt}
\end{align}
In the theory of electro-magneto-elasticity, 
the stress \cite{BCJ1964}
is
\begin{align}
{\bsym{\sigma}}&={{\mathbb{C}}}\nabla{\bsym{u}}+{{\mathtt{P}}}^{\top}\nabla{\varphi}+{{\mathtt{Q}}}^{\top}\nabla{\psi}
={{\mathbb{C}}}{\upepsilon}-{{\mathtt{P}}}^{\top}{\bsym{e}}-{{\mathtt{Q}}}^{\top}{\bsym{h}},\\
{\bsym{d}}&={{\mathtt{P}}}\nabla{\bsym{u}}-{{\mbf{E}}}\nabla{\varphi}-{\mbf{A}}\nabla{\psi}
={{\mathtt{P}}}{\upepsilon}+{{\mbf{E}}}{\bsym{e}}+{\mbf{A}}{\bsym{h}},\\
\bsym{b}&={{\mathtt{Q}}}\nabla{\bsym{u}}-{\mbf{A}}\nabla{\varphi}-{\mbf{B}}\nabla{\psi}
={{\mathtt{Q}}}{\upepsilon}+{\mbf{A}}{\bsym{e}}+{\mbf{B}}{\bsym{h}}.
\label{stressemmat}
\end{align}
The classical variation of electromagnetic enthalpy functional
\[
\mcal{H}({\bsym{u}},{\varphi},{\psi}) =\int_\Omega H({\bsym{u}},{\bsym{e}},{\bsym{h}})dv,
\]
provides the Euler--Lagrange differential equations.
In the context of the Euler--Lagrange equation \eqref{E:ele}, the dependent variable is $\bsym{y}=({{\bsym{u}}},{\varphi},{\psi})$, while ${{\bsym{u}}}$ has dimension $3$, ${\varphi}$ and ${\psi}$ have dimension $1$, so that $N=5$. Rewriting 
\eqref{E:flg} for ${{\bsym{u}}},{\varphi}$ and ${\psi}$ we get,
\begin{multline}
\label{E:snl}
{\Psi}(
\nabla{{{\bsym{u}}}},\nabla{{\varphi}},\nabla{{\psi}})=\frac{1}{2}\mathscr{D}_1(\alpha,\beta;i,j){u}_{i,\alpha}{u}_{j,\beta}+\frac{1}{2}\mathscr{D}_2(\alpha,\beta;4,4){\varphi}_{,\alpha}{\varphi}_{,\beta}+\frac{1}{2}\mathscr{D}_3(\alpha,\beta;5,5){\psi}_{,\alpha}{\psi}_{,\beta}\\
+\mathscr{D}_4(\alpha,\beta;4,i){u}_{i,\beta}{\varphi}_{,\alpha}+\mathscr{D}_5(\alpha,\beta;5,j){u}_{i,\beta}{\psi}_{,\alpha}+\mathscr{D}_6(\alpha,\beta;4,5){\varphi}_{,\alpha}{\psi}_{,\beta}\\
+\mathscr{D}_1(\alpha;i){u}_{i,\alpha}+\mathscr{D}_2(\alpha;4){\varphi}_{,\alpha}+\mathscr{D}_3(\alpha;5){\psi}_{,\alpha}+\frac{1}{2}\mathscr{D}(0;0)
\end{multline}
where $1\le i,j\le3$.\\
Comparing \eqref{E:sp} and \eqref{E:snl} we get,
\begin{alignat}{8}
\begin{vmatrix}
S^\alpha_i&S^\beta_i\\
S^\alpha_j&S^\beta_j
\end{vmatrix}
&=\mathscr{D}_1(\alpha,\beta;i,j)={{C}}_{i\alpha j\beta},\quad&
\begin{vmatrix}
S^\alpha_4&S^\beta_4\\
S^\alpha_4&S^\beta_4
\end{vmatrix}
&=\mathscr{D}_2(\alpha,\beta;4,4)=-{E}_{\alpha\beta}\label{E:ss}\\
\begin{vmatrix}
S^\alpha_5&S^\beta_5\\
S^\alpha_5&S^\beta_5
\end{vmatrix}
&=\mathscr{D}_3(\alpha,\beta;5,5)=-{B}_{\alpha\beta},\quad&
\begin{vmatrix}
S^\alpha_4&S^\beta_4\\
S^\alpha_i&S^\beta_i
\end{vmatrix}
&=\mathscr{D}_4(\alpha,\beta;4,i)={{P}}_{\alpha i\beta}\label{E:sfr}\\
\begin{vmatrix}
S^\alpha_5&S^\beta_5\\
S^\alpha_i&S^\beta_i
\end{vmatrix}
&=\mathscr{D}_5(\alpha,\beta;5,i)={{Q}}_{\alpha i\beta},\quad&
\begin{vmatrix}
S^\alpha_4&S^\beta_4\\
S^\alpha_5&S^\beta_5
\end{vmatrix}
&=\mathscr{D}_6(\alpha,\beta;4,5)=-{A}_{\alpha\beta}\label{E:ssx}\\
\begin{vmatrix}
S^\alpha_i&S^\beta_i\\
S^\alpha_{|\beta}&S^\beta_{|\beta}
\end{vmatrix}
&=\mathscr{D}_1(\alpha;i)=0,\quad&
\begin{vmatrix}
S^\alpha_4&S^\beta_4\\
S^\alpha_{|\beta}&S^\beta_{|\beta}
\end{vmatrix}
&=\mathscr{D}_2(\alpha;4)=0,\\
\begin{vmatrix}
S^\alpha_5&S^\beta_5\\
S^\alpha_{|\beta}&S^\beta_{|\beta}
\end{vmatrix}
&=\mathscr{D}_3(\alpha;5)=0,\quad&
\begin{vmatrix}
S^\alpha_{|\alpha}&S^\beta_{|\alpha}\\
S^\alpha_{|\beta}&S^\beta_{|\beta}
\end{vmatrix}
&=\mathscr{D}(0;0)=0
\end{alignat}
Thus, within the family of null Lagrangians stated by H. Rund \cite{Rund66}, we find that these are necessary and sufficient conditions on $\mathbb C,{\mbf{E}},{\mbf{B}}, {{\mathtt{P}}},{{\mathtt{Q}}},$ and ${\mbf{A}}$ so that \eqref{E:sp} becomes a null Lagrangian. As in the case of previous section, we obtain the explicit form of conditions on the material constants $\mathbb C,\mbf{E},{\mbf{B}},{{\mathtt{P}}},{{\mathtt{Q}}}$ and ${\mbf{A}}$.
From \eqref{E:ss}${}_1$
we get 
$\mathbb C=0$.
From \eqref{E:ss}${}_2$ and \eqref{E:sfr}${}_1$ we get ${\mbf{E}}=0$ and ${\mbf{B}}=0$ respectively.
From \eqref{E:sfr}${}_2$ and \eqref{E:ssx}${}_1$ we get the following restrictions on ${{\mathtt{P}}}$ and ${{\mathtt{Q}}}$,
\begin{align}
{{P}}_{kij}=-{{P}}_{jik},\;&{{Q}}_{kij}=-{{Q}}_{jik}\label{E:sev}\\
{{P}}_{kij}=0,\;&{{Q}}_{kij}=0&\text{if $j=k$}\label{E:stv}
\end{align}
Also given ${{P}}_{kij}={{P}}_{kji},{{Q}}_{kij}={{Q}}_{kji}$.
\begin{gather*}
\therefore\quad {{P}}_{kij}=-{{P}}_{jik}=-{{P}}_{jki}={{P}}_{ikj}={{P}}_{ijk}=-{{P}}_{kji}=-{{P}}_{kij}\\
\implies{{\mathtt{P}}}=0.
\end{gather*}
Same holds for ${{\mathtt{Q}}}$ so ${{\mathtt{Q}}}=0$ as well.
From \eqref{E:ssx}${}_2$ we get the following restrictons on ${\mbf{A}}$,
\begin{align*}
{A}_{ij}=-{A}_{ji},\\
\text{ so that }{A}_{ij}=0\text{ if $i=j$}.
\end{align*}
But it is given that ${A}_{ij}={A}_{ji}$, $\implies\mbf{A}=0$.
Therefore,
within the family of null Lagrangians stated by H. Rund \cite{Rund66}, we do not obtain any non-zero null Lagrangians in this case.
This conclusion resonates with that in linear elasticity where null Lagrangians cannot be interpreted as stored energy functionals. 

\section{Conclusion}
In this paper, we explored the concept of null Lagrangians within the linear framework of generalized elasticity.
In particular, we have constructed a family of null Lagrangians for linear Cosserat or micropolar elastic media as well as linear quasicrystal models. We found out that there does not exist a non-trivial null Lagrangian for (linear) piezoelectric and piezomagnetic media, or in general linear electro-magneto-elasticity. 
For the cases where the non-trivial null Lagrangian exists, we provide a split of the stored energy where the null Lagrangian part of it vanishes if and only if the relevant elasticity tensor obeys certain symmetry conditions as a reminder of Cauchy relations in classical elasticity. We have presented analysis of isotropic models and few other special cases as well.
At the end of the analyses presented, a question arises about the extension to nonlinear models of the discussed formulations; this is currently under investigation \cite{Basak2}.

\section*{Acknowledgments}
BLS acknowledges the partial support of SERB MATRICS grant MTR/2017/000013.

\appendix

\section{Characterization of null Lagrangians}
\label{S:cnl}

If the right-hand side of \eqref{E:ele} is to vanish for all values of $\frac{{\pd}^2y^j}{{\pd} x^{{{\beta}}}{\pd} x^{{{\gamma}}}}$, the coefficient of these quantities which is symmetric in ${{{\beta}}}$ and ${{{\gamma}}}$, must be skew-symmetric in these indices, i.e.,
\[
\frac{{\pd}^2{\Psi}}{{\pd} y^j_{{{\beta}}}{\pd} y^k_{{{\gamma}}}}=-\frac{{\pd}^2{\Psi}}{{\pd} y^j_{{{\gamma}}}{\pd} y^k_{{{\beta}}}}.
\]
Following the analysis of \cite{Rund66},
in general, ${\Psi}$ is a polynomial in its dependence on the derivative of $y$. 
In the class of Lagrangians which are interesting in this document, it is sufficient to consider $M=2.$
Thus, ${\Psi}$ can be written as 
\begin{equation}
\label{E:psid}
{\Psi}={\Psi}^{(2)}+{\Psi}^{(1)}+\Phi,
\end{equation}
\begin{equation}
\label{E:psi}
\text{where }
{\Psi}^{(1)}:=A^{\alpha_1}_{i_1}y^{i_1}_{\alpha_1},\quad
{\Psi}^{(2)}:=\frac{1}{2}A^{\alpha_1\alpha_2}_{i_1i_2}y^{i_1}_{\alpha_1}y^{i_2}_{\alpha_2}.
\end{equation}
Here, $\Phi,\mbf{A},{{\mathbb{A}}}$ are functions of $\bsym{x}$ and $\bsym{y}$ of class $\mathcal{C}^2$. Clearly
$A^{\alpha_1\alpha_2}_{i_1i_2}=A^{\alpha_2\alpha_1}_{i_2i_1}.$
Then \eqref{E:eqv} becomes
\begin{equation}
\label{E:eql}
\sum_{n=1}^2{\mathscr{E}}_k({\Psi}^{(n)})+{\mathscr{E}}_k(\Phi)\equiv0.
\end{equation}
On substituting from \eqref{E:ele} we see that this represents a set of PDEs for $\mbf{A}$ and ${{\mathbb{A}}}$, whose precise form we have to determine.

From \eqref{E:psi} we have
$\frac{{\pd}{\Psi}^{(2)}}{{\pd} y^k_{{{\gamma}}}}=A^{{{{\gamma}}}\alpha_2}_{ki_2}y^{i_2}_{\alpha_2}\quad\frac{{\pd}{\Psi}^{(1)}}{{\pd} y^k_{{{\gamma}}}}=A^{{{\gamma}}}_k.$
Therefore
\begin{align*}
{\mathscr{E}}_k({\Psi}^{(2)})&=\frac{\pd}{{\pd} x^{{{\gamma}}}}\left(A^{{{{\gamma}}}\alpha_2}_{ki_2}\right)y^{i_2}_{\alpha_2}+\left[\frac{\pd}{{\pd} y^{i_1}}\left(A^{\alpha_1\alpha_2}_{ki_2}\right)-\frac{1}{2}\frac{\pd}{{\pd} y^k}\left(A^{\alpha_1\alpha_2}_{i_1i_2}\right)\right]y^{i_1}_{\alpha_1}y^{i_2}_{\alpha_2}+A^{{{{\gamma}}}{{{\beta}}}}_{kj}\frac{{\pd}^2y^j}{{\pd} x^{{{\beta}}}{\pd} x^{{{\gamma}}}},\\
{\mathscr{E}}_k({\Psi}^{(1)})&=\frac{\pd}{{\pd} x^{{{\gamma}}}}\left(A^{{{\gamma}}}_k\right)+\left[\frac{\pd}{{\pd} y^{i_1}}\left(A^{\alpha_1}_k\right)-\frac{\pd}{{\pd} y^k}\left(A^{\alpha_1}_{i_1}\right)\right]y^{i_1}_{\alpha_1}.
\end{align*}
From \eqref{E:eql} we get,
\begin{align}
&\left[\frac{\pd}{{\pd} x^{{{\gamma}}}}\left(A^{{{{\gamma}}}\alpha_2}_{ki_2}\right)+\frac{\pd}{{\pd} y^{i_2}}\left(A^{\alpha_2}_k\right)-\frac{\pd}{{\pd} y^k}\left(A^{\alpha_2}_{i_2}\right)\right]y^{i_2}_{\alpha_2}+\left[\frac{\pd}{{\pd} y^{i_1}}\left(A^{\alpha_1\alpha_2}_{ki_2}\right)-\frac{1}{2}\frac{\pd}{{\pd} y^k}\left(A^{\alpha_1\alpha_2}_{i_1i_2}\right)\right]y^{i_1}_{\alpha_1}y^{i_2}_{\alpha_2}\notag\\
&\quad\quad\quad\quad\quad+A^{{{{\gamma}}}{{{\beta}}}}_{kj}\frac{{\pd}^2y^j}{{\pd} x^{{{\beta}}}{\pd} x^{{{\gamma}}}}+\left[\frac{\pd}{{\pd} x^{{{\gamma}}}}\left(A^{{{\gamma}}}_k\right)-\frac{{\pd}\Phi}{{\pd} y^k}\right]=0.\label{E:eleq}
\end{align}
This holds for all values of $y^i_\alpha$ and $\frac{{\pd}^2y^j}{{\pd} x^{{{\beta}}}{\pd} x^{{{\gamma}}}}$. So every term is zero separately. In particular,
\begin{align}
\left[\frac{\pd}{{\pd} x^{{{\gamma}}}}\left(A^{{{{\gamma}}}\alpha_2}_{ki_2}\right)+\frac{\pd}{{\pd} y^{i_2}}\left(A^{\alpha_2}_k\right)-\frac{\pd}{{\pd} y^k}\left(A^{\alpha_2}_{i_2}\right)\right]=0,\quad
\left[\frac{\pd}{{\pd} x^{{{\gamma}}}}\left(A^{{{\gamma}}}_k\right)-\frac{{\pd}\Phi}{{\pd} y^k}\right]=0.
\label{E:rl12}
\end{align}
for $k=1,\dots,N.$

Let us consider $3$ functions (of $\mathcal{C}^2$ smoothness) $\{S^\alpha(\bsym{x},\bsym{y})\}_{\alpha=1,2,3}$ and define 
\[
S^\alpha_i:=\frac{{\pd} S^\alpha}{{\pd} y^i}; \quad 
S^\alpha_{|\beta}:=\frac{{\pd} S^\alpha}{{\pd} x^\beta}.
\]
Recall
$\mfk{D}$s as defined by \eqref{Daaii}. 
Therefore,
\begin{align}
\frac{{\pd} \mfk{D}(\alpha_1;i_1)}{{\pd} x^{\alpha_1}}&=
\begin{vmatrix}
S^{\alpha_1}_{i_1|\alpha_1}&S^{\alpha_2}_{i_1|\alpha_1}\\
S^{\alpha_1}_{|\alpha_2}&S^{\alpha_2}_{|\alpha_2}
\end{vmatrix}
+
\begin{vmatrix}
S^{\alpha_1}_{i_1}&S^{\alpha_2}_{i_1}\\
S^{\alpha_1}_{|\alpha_1\alpha_2}&S^{\alpha_2}_{|\alpha_1\alpha_2}
\end{vmatrix}\notag\\
&\text{($\alpha_1,\alpha_2$ both being dummy indices, the latter determinant is $0$)}\\
&=\frac{\pd}{{\pd} y^{i_1}}
\begin{vmatrix}
S^{\alpha_1}_{|\alpha_1}&S^{\alpha_2}_{|\alpha_1}\\
S^{\alpha_1}_{|\alpha_2}&S^{\alpha_2}_{|\alpha_2}
\end{vmatrix}
-\begin{vmatrix}
S^{\alpha_1}_{|\alpha_1}&S^{\alpha_2}_{|\alpha_1}\\
S^{\alpha_1}_{i_1|\alpha_2}&S^{\alpha_2}_{i_1|\alpha_2}
\end{vmatrix}=\frac{{\pd} \mfk{D}(0;0)}{{\pd} y^{i_1}}-\frac{{\pd} \mfk{D}(\alpha_1;i_1)}{{\pd} x^{\alpha_1}}\notag\\
\implies2\frac{{\pd} \mfk{D}(\alpha_1;i_1)}{{\pd} x^{\alpha_1}}&=\frac{{\pd} \mfk{D}(0;0)}{{\pd} y^{i_1}}.
\label{E:rq}
\end{align}

Also,
\begin{align*}
\frac{{\pd} \mfk{D}(\alpha_1,\alpha_2;i_1,i_2)}{{\pd} x^{\alpha_2}}&=
\begin{vmatrix}
S^{\alpha_1}_{i_1|\alpha_2}&S^{\alpha_2}_{i_1|\alpha_2}\\
S^{\alpha_1}_{i_2}&S^{\alpha_2}_{i_2}
\end{vmatrix}
+
\begin{vmatrix}
S^{\alpha_1}_{i_1}&S^{\alpha_2}_{i_1}\\
S^{\alpha_1}_{i_2|\alpha_2}&S^{\alpha_2}_{i_2|\alpha_2}
\end{vmatrix}\\
&=\frac{\pd}{{\pd} y^{i_1}}
\begin{vmatrix}
S^{\alpha_1}_{|\alpha_2}&S^{\alpha_2}_{|\alpha_2}\\
S^{\alpha_1}_{i_2}&S^{\alpha_2}_{i_2}
\end{vmatrix}
+\frac{\pd}{{\pd} y^{i_2}}
\begin{vmatrix}
S^{\alpha_1}_{i_1}&S^{\alpha_2}_{i_1}\\
S^{\alpha_1}_{|\alpha_2}&S^{\alpha_2}_{|\alpha_2}
\end{vmatrix}
-
\begin{vmatrix}
S^{\alpha_1}_{|\alpha_2}&S^{\alpha_2}_{|\alpha_2}\\
S^{\alpha_1}_{i_1i_2}&S^{\alpha_2}_{i_1i_2}
\end{vmatrix}
-
\begin{vmatrix}
S^{\alpha_1}_{i_1i_2}&S^{\alpha_2}_{i_1i_2}\\
S^{\alpha_1}_{|\alpha_2}&S^{\alpha_2}_{|\alpha_2}
\end{vmatrix}\\
&=-\frac{{\pd} \mfk{D}(\alpha_1;i_2)}{{\pd} y^{i_1}}+\frac{{\pd} \mfk{D}(\alpha_1;i_1)}{{\pd} y^{i_2}}\\
&\text{(the remaining two determinants cancel each other)}.
\end{align*}
Comparing this with \eqref{E:rl12}${}_1$ we get
\begin{equation}
\label{E:ad}
A^{\alpha_1\alpha_2}_{i_1i_2}=\mfk{D}(\alpha_1,\alpha_2;i_1,i_2),\quad
A^{\alpha_1}_{i_1}=\mfk{D}(\alpha_1;i_1).
\end{equation}
From \eqref{E:rq},
$\frac{1}{2}\frac{{\pd} \mfk{D}(0;0)}{{\pd} y^{i_1}}=\frac{{\pd} A^{\alpha_1}_{i_1}}{{\pd} x^{\alpha_1}}=\frac{{\pd}\Phi}{{\pd} y^{i_1}}$
by \eqref{E:rl12}${}_2$,
which upon integrating yields
\[
\Phi=\frac{1}{2}\mfk{D}(0;0).
\]
Therefore, reverting back to \eqref{E:psid} we get \eqref{E:flg}.

Using \eqref{E:psid}, \eqref{E:psi}, \eqref{E:eleq} and \eqref{E:ad} we can summarize the result as:
\begin{theorem}[\cite{Rund66}, pg 257-258]
Any function of the form \eqref{E:flg} satisfies Euler-Lagrange equation \eqref{E:eqv} identically, where $\mathcal{C}^2$ functions $S^\alpha(\bsym{x},\bsym{y})$ are entirely arbitrary.
\label{mainthm}
\end{theorem}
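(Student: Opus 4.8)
The plan is to establish $\mathscr{E}_k(\Psi)\equiv 0$ by direct substitution of the ansatz \eqref{E:flg} into the Euler operator \eqref{E:ele}, reusing the polynomial reduction already set up in \eqref{E:psid}--\eqref{E:eleq}. First I would identify the polynomial coefficients of $\Psi$: comparing \eqref{E:flg} with \eqref{E:psi} gives $A^{\alpha_1\alpha_2}_{i_1i_2}=\mfk{D}(\alpha_1,\alpha_2;i_1,i_2)$, $A^{\alpha_1}_{i_1}=\mfk{D}(\alpha_1;i_1)$, and $\Phi=\tfrac12\mfk{D}(0;0)$, which is exactly \eqref{E:ad}. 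Substituting these into \eqref{E:eleq} reduces the claim to the vanishing of the four coefficient groups appearing there: the bracket multiplying $y^{i_2}_{\alpha_2}$, the bracket multiplying $y^{i_1}_{\alpha_1}y^{i_2}_{\alpha_2}$, the coefficient $A^{\gamma\beta}_{kj}$ of the second derivatives, and the derivative-free bracket. Since $\bsym{x}$, the $y^i_\alpha$, and the $\pd^2 y^j/\pd x^\beta\pd x^\gamma$ are independent arguments, it suffices to treat each group separately.

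I would then clear the three straightforward groups. The second-derivative coefficient is $A^{\gamma\beta}_{kj}=\mfk{D}(\gamma,\beta;k,j)$, which by the column-swap rule for the $2\times2$ determinants in \eqref{Daaii} is skew-symmetric under $\beta\leftrightarrow\gamma$; contracted with the symmetric $\pd^2 y^j/\pd x^\beta\pd x^\gamma$ it vanishes, which is precisely the skew-symmetry condition on $\pd^2\Psi/\pd y^j_\beta\pd y^k_\gamma$ recorded at the opening of this appendix. The linear and derivative-free brackets vanish on account of the two determinant identities already derived: the expansion of $\pd\mfk{D}(\alpha_1,\alpha_2;i_1,i_2)/\pd x^{\alpha_2}$ equals $-\pd\mfk{D}(\alpha_1;i_2)/\pd y^{i_1}+\pd\mfk{D}(\alpha_1;i_1)/\pd y^{i_2}$, which is exactly \eqref{E:rl12}${}_1$, while \eqref{E:rq} together with $\Phi=\tfrac12\mfk{D}(0;0)$ yields \eqref{E:rl12}${}_2$. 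Both identities use only equality of the mixed partials of the $S^\alpha$ and the fact that $\alpha_1,\alpha_2$ are dummy summation indices, which force the spurious determinants to collapse.

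The \emph{main obstacle} is the quadratic bracket $\frac{\pd}{\pd y^{i_1}}\mfk{D}(\alpha_1,\alpha_2;k,i_2)-\tfrac12\frac{\pd}{\pd y^k}\mfk{D}(\alpha_1,\alpha_2;i_1,i_2)$, the one group not verified explicitly in the preceding derivation. The direct route is to expand both determinants by the product rule, producing terms of the shape (second partial $S^\alpha_{pq}$) times (first partial $S^\beta_r$), and then contract against $y^{i_1}_{\alpha_1}y^{i_2}_{\alpha_2}$, which is invariant under the exchange $(i_1,\alpha_1)\leftrightarrow(i_2,\alpha_2)$; using this symmetry together with $S^\alpha_{pq}=S^\alpha_{qp}$ and the skew-symmetry of $\mfk{D}$ in its upper indices, the terms should cancel in pairs. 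A cleaner route that sidesteps this bookkeeping, and the one I would actually prefer, is to set $\Sigma^\alpha(\bsym{x}):=S^\alpha(\bsym{x},\bsym{y}(\bsym{x}))$, so that the total $x$-derivative $\frac{dS^\alpha}{dx^\gamma}=\pd\Sigma^\alpha/\pd x^\gamma$ is a genuine gradient. A multilinear expansion of the $2\times2$ determinant built from these total derivatives then shows that $\Psi$ in \eqref{E:flg} is nothing but the second principal invariant $\tfrac12\big((\tr\nabla\Sigma)^2-\tr((\nabla\Sigma)^2)\big)$, i.e.\ the sum of the $2\times2$ principal minors of $\nabla\Sigma$. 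Since every minor of a gradient is a total divergence \cite{Ball81}, $\Psi$ is itself a total divergence in $\bsym{x}$ and hence a null Lagrangian, giving $\mathscr{E}_k(\Psi)\equiv0$ at once and simultaneously accounting for all four coefficient groups.
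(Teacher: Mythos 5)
Your proposal is correct, and its first half coincides with the paper's own route: the identification \eqref{E:ad} together with $\Phi=\tfrac12\mfk{D}(0;0)$, the skew-symmetry of $\mfk{D}(\gamma,\beta;k,j)$ in $\beta,\gamma$ killing the second-derivative term, and the two determinant identities (\eqref{E:rq} and the expansion of $\pd\mfk{D}(\alpha_1,\alpha_2;i_1,i_2)/\pd x^{\alpha_2}$) disposing of the linear and derivative-free brackets in \eqref{E:eleq} exactly as in the appendix. Where you genuinely go beyond the paper is the quadratic bracket $\bigl[\tfrac{\pd}{\pd y^{i_1}}\mfk{D}(\alpha_1,\alpha_2;k,i_2)-\tfrac12\tfrac{\pd}{\pd y^k}\mfk{D}(\alpha_1,\alpha_2;i_1,i_2)\bigr]y^{i_1}_{\alpha_1}y^{i_2}_{\alpha_2}$: the paper never verifies it, instead deferring to Rund, so you have correctly isolated the one missing step. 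Your direct route does close it: expanding $\tfrac{\pd}{\pd y^{i_1}}\mfk{D}(\alpha_1,\alpha_2;k,i_2)\,y^{i_1}_{\alpha_1}y^{i_2}_{\alpha_2}$, the terms carrying $S^{\alpha}_{i_1i_2}$ cancel in pairs after relabelling the dummy pairs $(\alpha_1,i_1)\leftrightarrow(\alpha_2,i_2)$, leaving $(S^{\alpha_1}_{ki_1}S^{\alpha_2}_{i_2}-S^{\alpha_2}_{ki_1}S^{\alpha_1}_{i_2})y^{i_1}_{\alpha_1}y^{i_2}_{\alpha_2}$, and the same relabelling together with $S^\alpha_{i_1k}=S^\alpha_{ki_1}$ shows that $\tfrac12\tfrac{\pd}{\pd y^k}\mfk{D}(\alpha_1,\alpha_2;i_1,i_2)\,y^{i_1}_{\alpha_1}y^{i_2}_{\alpha_2}$ equals exactly the same expression, so the bracket vanishes. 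Your preferred alternative is a genuinely different argument from the paper's coefficient-by-coefficient verification: with $\Sigma^\alpha(\bsym{x})=S^\alpha(\bsym{x},\bsym{y}(\bsym{x}))$ the multilinear expansion of the determinant of total derivatives does reproduce all three pieces of \eqref{E:flg} (the cross terms both collapse to $\mfk{D}(\alpha_1;i_1)y^{i_1}_{\alpha_1}$ after a simultaneous row--column swap), so $\Psi=\tfrac12\bigl((\tr\nabla\Sigma)^2-\tr((\nabla\Sigma)^2)\bigr)$ is a sum of $2\times2$ minors of a gradient, hence a total divergence and annihilated by the Euler operator. What the paper's (and your first) route buys is the explicit list of conditions \eqref{E:rl12} that the body of the paper then exploits to constrain $\mathbb{A}$, $\mathbb{B}$, etc.; what your divergence route buys is a one-line conceptual proof of Theorem \ref{mainthm} that handles all four coefficient groups simultaneously and explains why the family of Rund null Lagrangians is exactly the pullback of the classical minor-based null Lagrangians. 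Both completions are sound.
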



\begin{thebibliography}{9}

\bibitem{Cosserat09}
E.~Cosserat and F.~Cosserat.
{\em Th{\'e}orie des corps d{\'e}formables.}
Librairie Scientifique A. Hermann et Fils (english translation by D. Delphenich, 2007), reprint 2009, Paris (1909)

\bibitem{Lo27}
A.~E.~H.~Love, 
{\em A Treatise on the Mathematical Theory of Elasticity}, 
4th edn. Cambridge University Presss Cambridge, UK (1927).

\bibitem{Caratheodory29}
C. Carath\'{e}odory, 
{\em Uber die Variationsrechnung bei mehrfachen Integralen},
Acta Szeged Sect. Scient. Mathem. 4, 193 (1929).

\bibitem{Landers42}
A. W. Landers, Jr., 
{\em Invariant Multiple Integrals in the Calculus of Variations}.
Univ. Chicago Press, Chicago (1942).
Also: in Contributions to the Calculus of Variations, 1938-1941, 175 (Univ. Chicago Press, 1942).

\bibitem{ErT58}
J.~Ericksen and C.~Truesdell.
{\em Exact theory of stress and strain in rods and shells.}
{\em Arch. Rational Mech. Anal.}, 1: 295--323 (1958).

\bibitem{dE62}
D.~G.~B.~Edelen, 
{\em The Null Set of the Euler-Lagrange Operator}, 
Arch. Rational Mech. Anal., 11, p. 117-121. (1962)

\bibitem{Ericksen62}
J.~Ericksen,
{\em Nilpotent energies in liquid crystal theory},
Arch. Rational Mech. Anal., 10, 189--196. (1962)

\bibitem{Picone62}
M. Picone, 
{\em Criteri sufficienti peril minima assoluto di un integrale pluridimensionale del primo ordine nel vettore minimante}, 
Mem. Accad. Naz. Lincei VI (1962) 281-338.

\bibitem{Picone64}
M. Picone, 
{\em Criteri sufficienti peril minima assoluto di un integrale pluridimensionale del primo ordine nel vettore minimante in parte o completamente libero alla frontiera del dominio di integrazione},
Mem Accad. Naz. Lincei VI (1964) 341-374.

\bibitem{Franchis64}
M. De Franchis, 
{\em La piu generale funzione d'invarianza per criteri sufficienti di minima con condizioni di Dirichlet per integrali pluridimensionali del primo ordine dipendenti da un vettore a piu componenti},
Rend. Accad. Naz. Lincei XXXVII (1964) 130-140.

\bibitem{Aero1}
E.~L.~Aero \& E.~V.~Kuvshinski, 
{\em Continuum theory of asymmetric elasticity. Microrotation effect}, 
Solid State Physics, 5(9), 2591--2598 (in Russian) (1963)
(English translation: Soviet Physics--Solid State, 5 (1964), 1892--1899)

\bibitem{BCJ1964}
Berlincourt, D. A., Curran, D. R., \& Jaffe, H.
{\em Piezoelectric and piezomagnetic materials and their function in transducers}.
Physical Acoustics: Principles and Methods, 1(Part A), 169-270.. (1964) 

\bibitem{Toupin64}
R.~Toupin.
{\em Theories of elasticity with couple-stresses.}
{\em Arch. Rational Mech. Anal.}, 17: 85--112 (1964).

\bibitem{Mindlin64}
R.~Mindlin.
{\em Microstructure in linear elasticity.}
{\em Arch. Rational Mech. Anal.}, 16: 51--78 (1964).

\bibitem{Aero2}
E. L. Aero and E. V. Kuvshinski, 
{\em Continuum theory of asymmetric elasticity. Equilibrium of an isotropic body}, 
Soviet Physics--Solid State. 6, 2141--2148. (1965)

\bibitem{Er66}
A.C.~Eringen,
{\em Linear theory of micropolar elasticity}.
J. Math. Mech., 15, 909--923 (1966)

\bibitem{Rund66}
H.~Rund, 
{\em The Hamiltonian-Jacobi Theory in the Calculus of Variation, Its Role in Mathematics and Physics}, 
D.~Van Nostrand Company LTD (1966)

\bibitem{Rund74}
H.~Rund, 
{\em Integral formulae associated with the Euler-Lagrange operators of multiple integral problems in the calculus of variations},
Aequationes Mathematicae, volume 11, page 212-229 (1974)

\bibitem{BL75}
D.M.~Barnett and J.~Lothe,
{\em Dislocations and Line Charges in Anisotropic Piezoelectric Insulators},
phys. stat. sol. (b)~{67} 105--111. (1975)

\bibitem{EK76}
A.C.Eringen and C.B.Kadafar,
{\em Polar field theories},
in Continuum Physics,
A. C. Eringen, Ed., vol. 4, pp. 1--73,
Academic Press, New York, NY, USA, (1976).

\bibitem{wJN77}
J.~P.~Nowacki, 
{\em Green's function for a hemitropic micropolar continuum}, 
Green's function for a hemitropic micropolar continuum, (8) 235. (1977)

\bibitem{dE80}
D.~G.~B.~Edelen, 
{\em Isovector Methods for Equations of Balance with Programs for Computer Assistance in Operator Calculations and an Exposition of Practical Topics of the Exterior Calculus. }
Sijthoff and Noordhoff, The Netherlands (1980).

\bibitem{Gurtin81}
Gurtin, M. E. 
{\em An introduction to continuum mechanics}.
New York: Academic Press. (1981)

\bibitem{Ball81}
J. M. Ball, J. C. Currie and P. J. Olver, 
{\em Null Lagrangians, weak continuity, and variational problems of arbitrary order}, 
J. Funct. Anal. 41 (1981) 135-174.

\bibitem{LakesB82}
R.~S. Lakes and R.~L. Benedict.
{\em Noncentrosymmetry in micropolar elasticity.}
{\em International Journal of Engineering Science}, 20(10):1161--1167 (1982)

\bibitem{wN86}
W.~Nowacki, 
{\em Theory of Asymmetric Elasticity}, 
PWN-Polish Scientific Publishers, ISBN: 0-08-027584-2. (1986)

\bibitem{dE86}
D.~G.~B.~Edelen \& D. C. Lagoudas, 
{\em Null Lagrangians, admissible tractions, and finite element methods}, 
International Journal of Solids and Structures, Volume 22, Issue 6, , Pages 659-672 (1986)

\bibitem{Mg88} 
G.~A.~Maugin. 
{\em Continuum mechanics of electromagnetic solids}. 
Oxford: Elsevier. (1988)

\bibitem{Olver88} 
P. J. Olver and J. Sivaloganathan, 
{\em Structure of null Lagrangians}, Nonlinearity 1 (1988) 389-398.

\bibitem{EM90}
A.~C.~Eringen. \& G.~A.~Maugin. 
{\em Electrodynamics of continua}.
New York: Springer. (1990). 

\bibitem{ADL92}
V.I.~Alshits, A.N.~Darinskii and J.~Lothe,
{\em On the existence of interface waves in half-infinite anisotropic elastic media with piezoelectronic and piezomagnetic properties},
Wave Motion~{16} 265--283. (1992)

\bibitem{Lakes93b}
R.~Lakes.{\em Materials with structural hierarchy.}{\em Nature}, 361:511--515 (1993)

\bibitem{Ding1993}
D.H. Ding, W. Wang, C. Hu and R. Yang,
{\em Generalized elasticity theory of quasicrystals},
Phys. Rev. B~48 7003--7010. (1993).

\bibitem{cT94}
A.~Campanella, M.~L.~Tonon, 
{\em A Note on the Cauchy Relations}, 
Meccanica, p. 105-108. (1994)

\bibitem{Virga94}
E. Virga, 
{\em Variational Theories for Liquid Crystals},
Appl. Math. and Math. Comput., 8, Chapman \& Hall (1994).

\bibitem{Lancia95}
M.~R. Lancia, G.~V. Caffarelli, P.~Podio-Guidugli, 
{\em Null Lagrangians in Linear Elasticity}, 
Mathematical Models and Methods in Applied Sciences, p. 415-427. (1995)

\bibitem{Eringen99}
A.~C.~Eringen.
{\em Microcontinuum Field Theories - Volume 1: Foundations and Solids}.
Springer Verlag, New York - Berlin - Heidelberg (1999)

\bibitem{Hu2000}
C.~Hu, R.~Wang and D.H. Ding,
{\em Symmetry groups, physical property tensors, elasticity and dislocations in quasicrystals},
Rep. Prog. Phys.~{63} 1--39. (2000)

\bibitem{Lakes2001}
R.~Lakes.
{\em Elastic and viscoelastic behavior of chiral materials.}
In K.~Golden, G.~Grimmert, R.~James, G.~Milton, and P.~Sen, editors,
{\em Int. J. Mech. Sci.}, volume~43, pages 1579--1589. (2001)


\bibitem{FI02}
F.~W.~Hehl, Y.~Itin, 
{\em The Cauchy Relations in Linear Elasticity Theory}, 
Journal of Elasticity, p. 185-192. (2002)

\bibitem{Carillo02}
Carillo, S. 
{\em Null Lagrangians and surface interaction potentials in nonlinear elasticity},
Series on advances in mathematics for applied sciences, 62: 9-18. (2002).

\bibitem{Sharma2004}
P.~Sharma,
{\em Size-dependent elastic fields of embedded inclusions in isotropic chiral solids},
International Journal of Solids and Structures,
Volume 41, Issues 22--23, Pages 6317-6333, (2004)

\bibitem{NGS062} 
D.~Natroshvili, I.~.G.~Stratis.
{\em Mathematical problems of the theory of elasticity of chiral materials for Lipschitz domains}. 
Math. Methods Appl. Sci., 29, 445--478. (2006)


\bibitem{PE09} W.~Pietraszkiewicz \& V.~A.~Eremeyev. {\em On natural strain measures of the non-linear micropolar continuum. }International Journal of Solids and Structures, 46(3--4), 774--787. (2009)

\bibitem{JOS11}
H.~Joumaa and M.~Ostoja-Starzewski.
{\em Stress and couple-stress invariance in non-centrosymmetric micropolar planar elasticity.}
{\em Proc. R. Soc. London A}, 467(2134):2896--2911 (2011)

\bibitem{Fan11}
T.Y. Fan,
{\em Mathematical Theory of Elasticity of Quasicrystals and Its Applications},
Science Press, Beijing and Springer-Verlag, Berlin, Heidelberg, 2011.

\bibitem{FI13}
Y.~Itin, F.~W.~Hehl, 
{\em The constitutive tensor of linear elastic Its decompositions, Cauchy relations, Lagrangians, and wave propagation}, 
J. Math. Phys., 54, 042903. (2013)

\bibitem{Basak2}
Sharma, B. L. \& Basak, N.
{\em Null lagrangians in Cosserat elasticity}.
{in preparation}. (2020)


\end{thebibliography}
\end{document}